\documentclass[sigconf]{acmart}
\usepackage{multirow}
\usepackage{float}
\usepackage{enumitem}
\usepackage{amsmath}
\usepackage{algorithmic}
\newtheorem{assumption}{Assumption}
\newtheorem{theorem}{Theorem}

\usepackage[ruled, linesnumbered]{algorithm2e}
\usepackage[table]{xcolor}
\usepackage{colortbl}
\usepackage{subcaption}
\usepackage{pifont}
\theoremstyle{remark}

\DeclareMathOperator*{\argmin}{arg\,min}

\begin{document}

\title{CLEAR: Null-Space Projection for Cross-Modal De-Redundancy in Multimodal Recommendation}

\author{Hao Zhan}
\authornote{These authors contributed equally to this work.}
\affiliation{
  \institution{Hefei University of Technology}
  \country{}
}
\email{2023213432@mail.hfut.edu.cn}

\author{Yihui Wang}
\authornotemark[1]
\affiliation{
  \institution{Hefei University of Technology}
  \country{}
}
\email{2023213406@mail.hfut.edu.cn}

\author{Yonghui Yang}
\authornote{Corresponding author.}
\affiliation{
  \institution{National University of Singapore}
  \country{}
}
\email{yh_yang@nus.edu.sg}




\author{Danyang Yue}
\affiliation{
\institution{Huazhong University of Science and Technology}
\city{}
\country{}
}
\email{danyangy@hust.edu.cn}

\author{Yu Wang}
\affiliation{
\institution{Hefei University of Technology}
\city{}
\country{}
}
\email{wangyu20001162@gmail.com}

\author{Pengyang Shao}
\affiliation{
\institution{National University of Singapore}
\city{}
\country{}
}
\email{shaopymark@gmail.com}

\author{Fei Shen}
\affiliation{
\institution{National University of Singapore}
\city{}
\country{}
}
\email{shenfei29@nus.edu.sg}

\author{Fei Liu}
\affiliation{
\institution{Hefei University of Technology}
\city{}
\country{}
}
\email{feiliu@mail.hfut.edu.cn}

\author{Le Wu}
\affiliation{
\institution{Hefei University of Technology}
\city{}
\country{}
}
\email{lewu.ustc@gmail.com}


\renewcommand{\shortauthors}{Ho Zhan et al.}

\newcommand{\fullname}{\textit{\textbf{D}e-\textbf{R}edundancy oriented \textbf{M}ultimodal \textbf{R}epresentation~(\textbf{DRMR})}}
\newcommand{\shortname}{DRMR}

\begin{abstract}
Multimodal recommendation has emerged as an effective paradigm for enhancing collaborative filtering by incorporating heterogeneous content modalities. Existing multimodal recommenders predominantly focus on reinforcing cross-modal consistency to facilitate multimodal fusion. However, we observe that multimodal representations often exhibit substantial cross-modal redundancy, where dominant shared components overlap across modalities. Such redundancy can limit the effective utilization of complementary information, explaining why incorporating additional modalities does not always yield performance improvements. In this work, we propose CLEAR, a lightweight and plug-and-play cross-modal de-redundancy approach for multimodal recommendation. Rather than enforcing stronger cross-modal alignment, CLEAR explicitly characterizes the redundant shared subspace across modalities by modeling cross-modal covariance between visual and textual representations. By identifying dominant shared directions via singular value decomposition and projecting multimodal features onto the complementary null space, CLEAR reshapes the multimodal representation space by suppressing redundant cross-modal components while preserving modality-specific information. This subspace-level projection implicitly regulates representation learning dynamics, preventing the model from repeatedly amplifying redundant shared semantics during training. Notably, CLEAR can be seamlessly integrated into existing multimodal recommenders without modifying their architectures or training objectives. Extensive experiments on three public benchmark datasets demonstrate that explicitly reducing cross-modal redundancy consistently improves recommendation performance across a wide range of multimodal recommendation models.
\end{abstract}

\begin{CCSXML}
<ccs2012>
   <concept>
       <concept_id>10002951.10003317.10003347.10003350</concept_id>
       <concept_desc>Information systems~Recommender systems</concept_desc>
       <concept_significance>500</concept_significance>
       </concept>
 </ccs2012>
\end{CCSXML}

\ccsdesc[500]{Information systems~Recommender systems}

\keywords{Cross-modal Redundancy, Multimodal Recommendation, Null Space Projection}

\maketitle

\section{Introduction}
Multimodal recommendation has emerged as a widely adopted paradigm for enhancing collaborative filtering by incorporating auxiliary information from multiple modalities~\cite{he2016vbpr, liu2024multimodal}. By leveraging heterogeneous item content beyond user–item interaction data, multimodal recommenders aim to learn richer and more expressive item representations. Existing multimodal recommendation approaches can be broadly categorized into two lines of research: enhancing item representations with semantic features extracted from individual modalities~\cite{he2016vbpr, chen2017ACF}; and exploiting latent item–item structures induced by modality-specific similarities~\cite{zhang2021LATTICE}. Building upon these modeling paradigms, a growing body of recent work further introduces alignment-driven objectives to improve multimodal fusion, typically by enforcing cross-modal consistency within a shared latent space~\cite{MICOR, liu2023multimodal}.

Despite their effectiveness, existing multimodal recommendation models often implicitly assume that different modalities provide complementary information. In practice, however, auxiliary modalities may exhibit substantial semantic overlap, resulting in significant cross-modal redundancy~\cite{zhang2023mining, wang2025multimodal}. Such redundancy not only diminishes the marginal gains from incorporating additional modalities, but also introduces noise and increases the risk of overfitting when redundant signals are indiscriminately fused. Moreover, alignment-driven objectives, while effective in promoting cross-modal consistency, may inadvertently exacerbate this issue by encouraging different modalities to collapse into similar latent embeddings~\cite{chaudhuri2025closer, yi2025decipher}. Consequently, the learned multimodal representations may underutilize modality-specific characteristics, thereby limiting the full potential of multimodal recommendation.

To better understand the nature of cross-modal redundancy, we conduct simple empirical analyses to illustrate the prevalence of cross-modal redundancy in real-world recommendation data. As shown in Figure~\ref{fig: intro_redundancy}, given an anchor item, we retrieve similar items separately based on pre-trained visual and textual features. We find that the similarity score distributions of the retrieved items under the two modalities exhibit substantial overlap, indicating that visual- and text-based retrieval induce highly similar item neighborhoods in the representation space. Moreover, the overlap ratio between the top-K retrieved items from visual and textual modalities is consistently an order of magnitude higher than random expectation. These observations indicate that different modalities may give rise to highly redundant item–item structures even before being jointly modeled. In practice, such redundancy often manifests as performance saturation in multimodal recommenders: simply introducing additional content modalities on top of already informative representations does not necessarily improve recommendation performances~\cite{yang2025less, ye2025multimodal}. Therefore, simply aggregating multiple modalities is insufficient; the key challenge is to effectively exploit complementary modality-specific information while suppressing redundant shared semantics.



To tackle the redundancy challenge, we propose \textbf{C}ross-moda\textbf{L} de-r\textbf{E}dundant subsp\textbf{A}ce p\textbf{R}ojection~(\textsc{CLEAR}), a lightweight and plug-and-play framework for suppressing cross-modal redundancy in multimodal recommendation. \textit{The core idea of CLEAR is to treat cross-modal redundancy as a low-dimensional shared subspace}: when visual and textual modalities encode overlapping semantics, their representations tend to correlate along a few dominant directions, which are repeatedly reinforced during training and thus dilute modality-specific signals. To operationalize this idea, CLEAR first explicitly models the cross-modal covariance between visual and textual representations and then applies singular value decomposition to extract the dominant shared directions. Rather than enforcing stronger alignment or introducing auxiliary objectives, CLEAR performs a subspace-level intervention by projecting multimodal representations onto the null-space orthogonal to these shared directions, thereby preventing redundant components from being reinforced during training. This projection reshapes the multimodal representation space, mitigating the amplification of shared semantics while preserving complementary information.

\begin{figure} 
    \centering
    \includegraphics[width=0.99\linewidth]
    {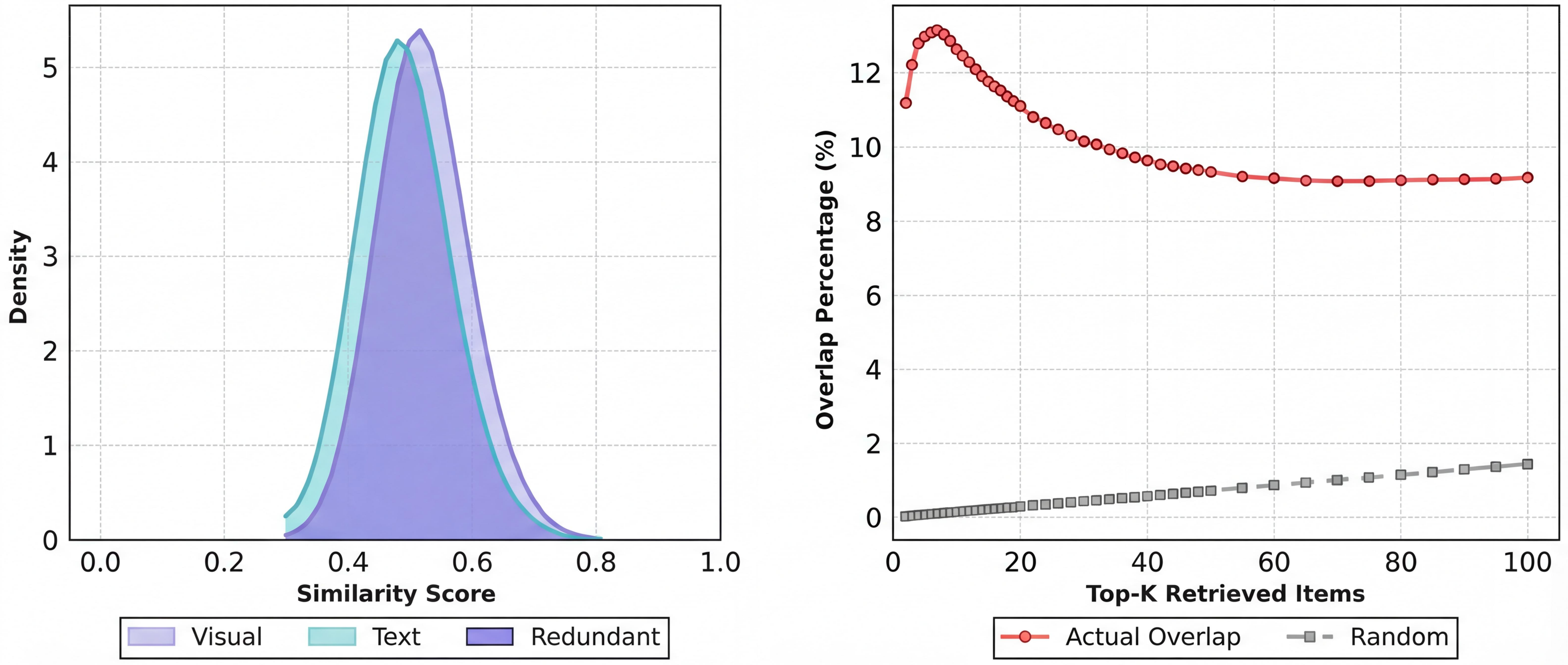}
    \vspace{-0.2cm}
    \caption{Illustration of cross-modal redundancy. Left: density distributions of similarity scores for items retrieved by visual and textual modalities. Right: overlap ratios between the top-K retrieved items across modalities.}
    \label{fig: intro_redundancy}
    \vspace{-0.2cm}
\end{figure}

CLEAR further provides two intuitive control factors to balance redundancy suppression and information preservation: the redundancy rank $k$, which specifies how many dominant shared directions are treated as redundant, and the projection strength $\lambda$, which controls the degree of suppression. By introducing lightweight control over redundancy at the representation level, CLEAR intervenes during training by applying null-space projection to multimodal representations, requiring no architectural modifications or auxiliary loss terms. Owing to its simplicity and generality, CLEAR can be seamlessly integrated into existing multimodal recommenders as a plug-and-play module. Extensive experiments on three benchmark datasets demonstrate that explicitly reducing cross-modal redundancy consistently improves recommendation performance across diverse multimodal recommendation architectures. Our contributions are summarized as follows:
\begin{itemize}[leftmargin=2.0em]
    \item We reframe multimodal recommendation by highlighting that its effectiveness depends on \emph{principled redundancy control across modalities}, instead of increasing modeling capacity or enforcing stronger cross-modal alignment.
    \item We propose \textsc{CLEAR}, a null-space projection framework that suppresses cross-modal redundancy by intervening at the representation level through projection on multimodal features, enabling controlled redundancy suppression while preserving modality-specific information.
    \item We conduct extensive experiments on three benchmark datasets, demonstrating that explicitly reducing cross-modal redundancy with \textsc{CLEAR} consistently improves recommendation performance across diverse multimodal recommendation architectures.
\end{itemize}

\section{Preliminaries}

\subsection{Problem Definition}

In multimodal recommendation, we aim to improve preference prediction by leveraging user-item interactions alongside item multimodal features. Formally, let $\mathcal{U} = \{u_1, u_2, \dots, u_M\}$ and $\mathcal{I} = \{i_1, i_2, \dots, i_N\}$ denote the sets of users and items. The observed interactions form a binary matrix $\mathbf{R} \in \{0,1\}^{M \times N}$, where $r_{ui}=1$ indicates an interaction between user $u$ and item $i$. Each item $i$ is associated with raw visual features $\mathbf{v}_i^{\text{raw}} \in \mathbb{R}^{d_v}$ and textual features $\mathbf{t}_i^{\text{raw}} \in \mathbb{R}^{d_t}$, collectively denoted as matrices $\mathbf{V}^{\text{raw}} \in \mathbb{R}^{N \times d_v}$ and $\mathbf{T}^{\text{raw}} \in \mathbb{R}^{N \times d_t}$. The task is to learn representations $\mathbf{e}_u$ and $\mathbf{e}_i$ such that the scoring function $f_\Theta(\mathbf{e}_u, \mathbf{e}_i)$ accurately predicts $\hat{y}_{ui}$, where $\Theta$ denotes model parameters. In this work, we focus on eliminating cross-modal redundancy between visual and textual features by applying null-space projection with intensity $\lambda$ along the top-$k$ dominant correlated directions.


\subsection{Multimodal Recommendation Formulation}

Multimodal recommenders typically follow a general pipeline to incorporate auxiliary content features into collaborative filtering. Raw multimodal features $\mathbf{V}^{\text{raw}}$ and $\mathbf{T}^{\text{raw}}$ are first transformed into dense embeddings through feature encoders: 
\begin{equation}
\mathbf{V} = \phi_v(\mathbf{V}^{\text{raw}}), \quad \mathbf{T} = \phi_t(\mathbf{T}^{\text{raw}}),
\end{equation}
where $\phi_v$ and $\phi_t$ project raw features into a shared embedding space $\mathbb{R}^d$. These embeddings are then refined through representation learning modules, such as graph convolutional networks over user-item interactions, attention mechanisms, or contrastive learning objectives to capture collaborative signals and cross-modal dependencies. The resulting modality-specific representations are fused via weighted aggregation or concatenation to form unified item embeddings $\mathbf{e}_i$. Finally, user preferences are predicted through scoring functions, and recommendations are generated by ranking items based on these predicted scores.

Despite the effectiveness of this paradigm, a critical challenge emerges: visual and textual modalities often encode overlapping semantics, leading to \textbf{cross-modal redundancy} that dilutes modality-specific signals and causes optimization to repeatedly amplify shared components. This motivates our approach to explicitly model and suppress such redundancy, as detailed in Section~\ref{sec:methodology}.

\section{Methodology}
\label{sec:methodology}
In this section, we conduct a detailed analysis of CLEAR's key mechanisms and procedures, including the construction of cross-modal variance matrices, SVD decomposition, and null-space projection operations. Figure~\ref{fig:architecture} illustrates CLEAR's architecture and training pipeline.

\begin{figure*}[t]
    \centering
    \includegraphics[width=0.95\textwidth]{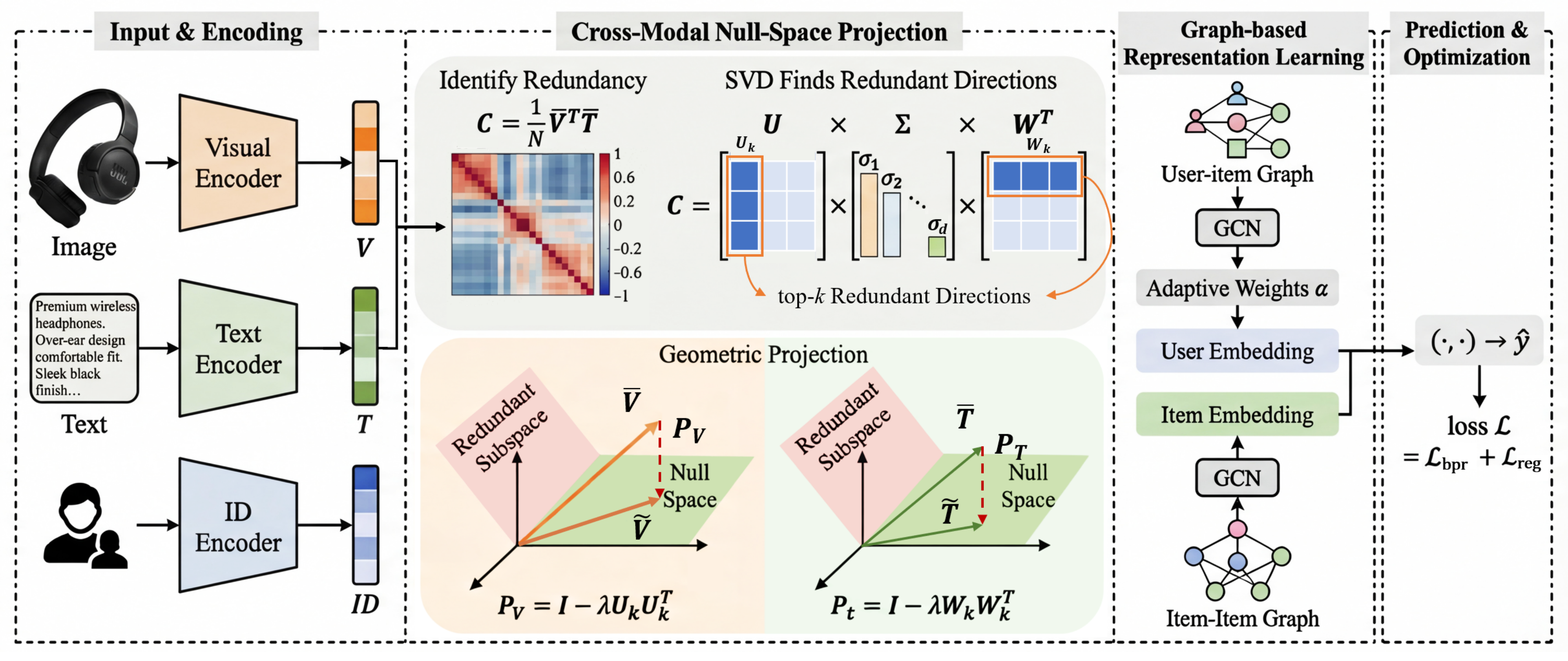}
    \caption{The overall architecture of CLEAR. Visual and textual features are encoded, then a cross-modal covariance matrix $\mathbf{C}$ is constructed and decomposed via SVD to identify top-$k$ redundant directions. Null-space projections $\mathbf{P}_V$ and $\mathbf{P}_T$ suppress redundancy while preserving modality-specific information for downstream graph-based recommendation.}
    \label{fig:architecture}
\end{figure*}

\subsection{Overview of CLEAR}

CLEAR addresses cross-modal redundancy by performing a direct geometric intervention on the learned representations rather than relying on auxiliary losses. The method begins by explicitly modeling the statistical dependencies between visual and textual features through their cross-modal covariance matrix. Then we perform SVD on this covariance matrix to identify the dominant directions in which redundant correlation energy is concentrated. Treating these leading singular vectors as the primary axes of redundancy, CLEAR applies a soft null-space projection that attenuates components along those directions while fully preserving information in the orthogonal complement so that modality-specific signals become more prominent and less overshadowed by shared semantics. The projected features are then seamlessly fed into the downstream graph neural network to capture collaborative filtering signals. In the following section, we will provide a detailed explanation of CLEAR.

\subsection{Cross-Modal Redundancy Modeling}
In our previous statements, we defined redundancy as the semantic overlap between visual and textual modalities. Specifically, this semantic overlap manifests mathematically as cross-modal covariance: when two modalities encode the same concept, their corresponding feature dimensions exhibit synchronized patterns of change across items. Following established information theory~\cite{reza1994introduction}, we define cross-modal redundancy as task-agnostic shared information. For visual features $\mathbf{V}$ and textual features $\mathbf{T}$, the total mutual information decomposes as:
\begin{equation}
    I(\mathbf{V}; \mathbf{T}) = \underbrace{I(\mathbf{V}; \mathbf{T} | Y)}_{\text{task-relevant}} + \underbrace{[I(\mathbf{V}; \mathbf{T}) - I(\mathbf{V}; \mathbf{T}|Y)]}_{\text{redundancy } \mathcal{R}}
\end{equation}
where $Y$ denotes the recommendation task labels. The redundancy term $\mathcal{R}(\mathbf{V}, \mathbf{T}|Y)$ quantifies information shared between modalities but irrelevant to recommendation performance.

\subsubsection{Characterizing Redundancy and Its Interference Mechanisms}
A crucial distinction separates redundancy from alignment is that alignment measures cross-modal correspondence, whereas redundancy captures systematic overlap. 
This clarification reveals why redundancy harms recommendation performance in three ways: \emph{representation collapse}, \emph{fusion imbalance}, and \emph{optimization pathology}. Semantic overlap primarily encodes modality-generic information, which offers limited discriminative power and struggles to capture more distinctive signals. During training, redundant information often leads to uneven information allocation, causing overemphasized weighting and attention toward duplicated coding patterns. Specifically, in SGD~\cite{ruder2016overview}, models typically prioritize fitting these repetitive patterns that are easier to learn, thus weakening the incentive to learn more discriminative modality-specific information. Without suppressing redundancy, multimodal learning gradually degenerates into amplification of unimodal signals, further exacerbating overfitting to repetitive information. We provide rigorous theoretical proofs for the above three points in Appendix~\ref{harmful} to demonstrate the necessity of addressing redundancy issues.



\subsubsection{Cross-Modal Covariance Construction}

To quantify the correlation structure between visual and textual features, we construct the cross-modal covariance matrix.
Given feature matrices $\mathbf{V} \in \mathbb{R}^{N \times d_v}$ and $\mathbf{T} \in \mathbb{R}^{N \times d_t}$ obtained after the MLP transformation, we first perform mean-centering to eliminate global offsets and focus on relational variations:
\begin{equation}
    \bar{\mathbf{V}} = \mathbf{V} - \mathbf{1}_N \boldsymbol{\mu}_v^\top, \quad \bar{\mathbf{T}} = \mathbf{T} - \mathbf{1}_N \boldsymbol{\mu}_t^\top,
\end{equation}
where $\bar{\mathbf{V}}$ and $\bar{\mathbf{T}}$ denote the mean-centered visual and textual feature matrices, respectively; $\mathbf{1}_N$ is a column vector of ones; and $\boldsymbol{\mu}_v = \frac{1}{N} \sum_{i=1}^{N} \mathbf{v}_i$ and $\boldsymbol{\mu}_t = \frac{1}{N} \sum_{i=1}^{N} \mathbf{t}_i$ are the column-wise means. Mean-centering ensures that the covariance captures true co-variation patterns rather than biases from differing magnitudes or shifts across modalities.
The cross-modal covariance matrix is then defined as:
\begin{equation}
    \mathbf{C} = \frac{1}{N} \bar{\mathbf{V}}^\top \bar{\mathbf{T}} \in \mathbb{R}^{d_v \times d_t}.
\end{equation}
This matrix $\mathbf{C}$ is derived from the empirical covariance estimator in statistics~\cite{nakada2023understanding,oh2024towards}, scaled by $1/N$ to provide an unbiased estimate of the population covariance between visual and textual features. The element $c_{ij} = \frac{1}{N} \sum_{n=1}^{N} \bar{v}_{ni} \bar{t}_{nj}$ represents the covariance between the $i$-th visual dimension and the $j$-th textual dimension across all $N$ items. Each row of $\mathbf{C}$ can be interpreted as the correlation strengths of a visual dimension (e.g., a color channel) with all textual dimensions (e.g., word embedding components). Large absolute values in $\mathbf{C}$ indicate strong co-variation, signaling potential redundancy where changes in one modality systematically predict changes in the other.

\subsubsection{Localizing Redundancy via SVD}

To extract the structure of dominant correlation, we perform Singular Value Decomposition(SVD) on the cross-modal covariance matrix:
\begin{equation}
    \mathbf{C} = \mathbf{U} \boldsymbol{\Sigma} \mathbf{W}^\top = \sum_{i=1}^{d} \sigma_i \mathbf{u}_i \mathbf{w}_i^\top
\end{equation}

where $\mathbf{U} = [\mathbf{u}_1, \ldots, \mathbf{u}_d] \in \mathbb{R}^{d \times d}$ contains the left singular vectors corresponding to the visual feature space, and $\mathbf{W} = [\mathbf{w}_1, \ldots, \mathbf{w}_d] \in \mathbb{R}^{d \times d}$ contains the right singular vectors corresponding to the textual feature space. This assignment follows directly from the construction of the cross-modal covariance matrix $\mathbf{C} = \frac{1}{N} \bar{\mathbf{V}}^\top \bar{\mathbf{T}}$, in which the rows index the visual dimensions and the columns the textual dimensions. As a result, $\mathbf{U}$ spans the visual subspace while $\mathbf{W}$ spans the textual subspace. The diagonal matrix $\boldsymbol{\Sigma} = \mathrm{diag}(\sigma_1, \ldots, \sigma_d)$ contains singular values sorted in descending order, with $\sigma_1 \geq \sigma_2 \geq \cdots \geq \sigma_d \geq 0$. Each singular value $\sigma_i$ quantifies the strength of cross-modal correlation captured by the corresponding pair of singular vectors, and its magnitude directly reflects the amount of redundant energy along that direction.

Performing SVD on the cross-modal covariance matrix essentially involves finding an orthogonal basis that maximizes the visual-textual collaborative variation across all possible linear direction pairs. The singular vector pairs are not artificially defined semantic axes but emerge automatically from the covariance structure in a globally statistical sense. The corresponding singular value $\mathbf{\sigma}_i$ quantifies the strength of cross-modal correlation along that direction. It is important to note that SVD extracts only the optimal collaborative directions in a linear sense. In real recommendation data, different semantic factors in the feature space are often not strictly orthogonal but highly entangled and overlapping. Consequently, each singular direction typically does not correspond to a single, pure semantic concept but rather resembles a linear mixture of multiple effective information components and ineffective noise. We acknowledge that decomposition operations may affect task-relevant information, yet we can still demonstrate through Theory~\ref{necessity} that redundancy suppression continues to yield performance gains.

\subsection{Null-Space Projection for De-redundancy}
Having identified redundancy, we now describe how CLEAR suppresses redundancy during training. The key insight is that redundant directions, once identified, tend to be repeatedly reinforced during training because they dominate the representations across modalities. CLEAR intervenes at the representation level by projecting the embeddings onto the subspace orthogonal to the redundant directions, thereby preventing the amplification of shared semantic components. 
Formally, given a subspace $\mathcal{S}$ spanned by the columns of a matrix $\mathbf{M} \in \mathbb{R}^{d \times k}$ with orthonormal columns, a vector $\mathbf{v} \in \mathbb{R}^d$ is said to lie in the null space of $\mathcal{S}$ if and only if
\begin{equation}
\mathbf{M}^\top \mathbf{v} = \mathbf{0}.
\end{equation}
The corresponding orthogonal projection operator onto this null space is defined as
\begin{equation}
\mathbf{P}^{\perp} = \mathbf{I} - \mathbf{M}\mathbf{M}^\top,
\end{equation}
where $\mathbf{I}$ denotes the identity matrix. Geometrically, applying $\mathbf{P}^{\perp}$ to a vector removes its components along the directions spanned by $\mathbf{M}$, retaining only the components orthogonal to this subspace. Theorem~\ref{thm:suppression} demonstrates the effectiveness of the null-space projection in eliminating redundancy.

Let $\mathbf{U}_k = [\mathbf{u}_1, \ldots, \mathbf{u}_k] \in \mathbb{R}^{d \times k}$ and $\mathbf{W}_k = [\mathbf{w}_1, \ldots, \mathbf{w}_k] \in \mathbb{R}^{d \times k}$ denote the top-$k$ left and right singular vectors. The orthogonal projector onto the subspace spanned by $\mathbf{U}_k$ is $\mathbf{U}_k \mathbf{U}_k^\top$. To project onto its null-space, we use:
\begin{equation}
    \mathbf{P}_v^{\perp} = \mathbf{I} - \mathbf{U}_k \mathbf{U}_k^\top, \quad \mathbf{P}_t^{\perp} = \mathbf{I} - \mathbf{W}_k \mathbf{W}_k^\top
\end{equation}
where $\mathbf{I} \in \mathbb{R}^{d \times d}$ is the identity matrix.

To enable flexible control over suppression strength and maintain training stability, we introduce soft projection with a parameter $\lambda \in [0,1]$:
\begin{equation}
    \mathbf{P}_v = \mathbf{I} - \lambda \mathbf{U}_k \mathbf{U}_k^\top, \quad \mathbf{P}_t = \mathbf{I} - \lambda \mathbf{W}_k \mathbf{W}_k^\top
\end{equation}
When $\lambda = 0$, features pass through unchanged; when $\lambda = 1$, full null-space projection is applied. Intermediate values allow partial suppression, achieving a balance between redundancy reduction and information preservation. The projected features are:
\begin{equation}
    \tilde{\mathbf{V}} = \bar{\mathbf{V}} \mathbf{P}_v, \quad \tilde{\mathbf{T}} = \bar{\mathbf{T}} \mathbf{P}_t
\end{equation}

Geometrically, this operation attenuates components along the top-$k$ redundant directions by a factor of $(1 - \lambda)$, while leaving orthogonal components intact. Although SVD cannot achieve perfect separation between essential discriminative signals and redundant semantic overlap, choosing $\lambda < 1$ allows the model to preserve partial information even along redundant directions, thereby mitigating collateral damage to task-relevant signals that are entangled with redundancy.

\subsection{Theoretical Foundation: Low-Rank Redundancy Hypothesis}

Our method is grounded in a central theoretical hypothesis: \emph{cross-modal redundancy concentrates in a low-dimensional subspace whose intrinsic dimensionality is fixed and data-dependent}. This perspective fundamentally distinguishes our approach from adaptive strategies and provides strong justification for employing a fixed redundancy rank $k$ throughout training.

Following the manifold hypothesis~\cite{fefferman2016testing,brown2022verifying}, high-dimensional multimodal data lies on or near low-dimensional manifolds. In recommendation scenarios, visual and textual modalities share a common semantic manifold $\mathcal{M}_s$ with intrinsic dimension $d_s \ll \min(D_v, D_t)$, formally defined as the smallest $m$ such that the data can be faithfully approximated by an $m$-dimensional manifold:
\begin{equation}
d_s = \min\left\{m : \frac{1}{n}\sum_{i=1}^n \|\mathbf{x}_i - \text{proj}_{\mathcal{M}_m}(\mathbf{x}_i)\|^2 < \epsilon\right\}.
\end{equation}
This shared manifold primarily arises from a limited set of modality-agnostic semantic factors—most notably category-level concepts, whose number $c$ is typically far smaller than both the feature dimensionality $d$ and the sample size $N$. Consequently, the cross-modal covariance matrix $\mathbf{C}$ exhibits a pronounced low-rank structure~\cite{zheng2024learning}:
\begin{equation}
\text{rank}(\mathbf{C}) \leq c \ll \min(N, d).
\end{equation}

Critically, recent studies on neural network optimization~\cite{li2018measuring} demonstrate that such intrinsic dimensionality is determined by the underlying data distribution rather than transient training dynamics; the effective degrees of freedom of the objective landscape remain stable throughout learning. Applied to cross-modal redundancy, this implies that the optimal number of redundant directions $k^*$ is essentially a dataset constant rather than a quantity that should evolve during optimization:
\begin{equation}
k^* = \argmin_{k} \left\{ \sum_{i > k} \sigma_i^2(\mathbf{C}) + \lambda k \right\}.
\end{equation}
Therefore, we adopt a \textbf{fixed-rank projection} strategy in which $k$ is selected once on the validation set and held constant during training. This design ensures a stable projection subspace, allowing the model to consistently suppress the same redundant directions and reliably preserve modality-specific information in the orthogonal complement. In contrast, adaptive rank selection (e.g., energy-ratio thresholding) introduces undesirable instability: singular values fluctuate across mini-batches and optimization steps, resulting in a constantly shifting projection target. Moreover, high-energy components may stem from modality-specific noise rather than true semantic redundancy~\cite{rehman2025multimodal}, making purely spectral adaptive methods prone to error.

\subsection{Optimization and Training}

Our training procedure integrates null-space projection seamlessly into the standard recommendation optimization pipeline without introducing auxiliary loss functions. The projection matrices are computed at the epoch level and cached to balance computational efficiency with feature distribution adaptability, while model parameters are updated via mini-batch gradient descent. Specifically, at the beginning of each training epoch (or every $\tau$ epochs), we update the projection matrices $\mathbf{P}_v$ and $\mathbf{P}_t$ based on the current MLP-transformed features, which are then applied consistently across all mini-batches within that epoch. During forward propagation, the projected features $\tilde{\mathbf{V}} = \mathbf{V}\mathbf{P}_v$ and $\tilde{\mathbf{T}} = \mathbf{T}\mathbf{P}_t$ flow through graph convolutional layers to generate modality-specific embeddings, which are subsequently fused via learnable modal-specific attention weights and further refined through item-item graph aggregation to construct the final user and item representations for preference prediction. Following standard practice in implicit feedback recommendation, we adopt the Bayesian Personalized Ranking (BPR) loss~\cite{rendle2012bpr} to optimize pairwise preference rankings:
\begin{equation}
\mathcal{L}_{\text{BPR}} = - \sum_{(u,i,j) \in \mathcal{O}} \log \sigma(\hat{y}_{ui} - \hat{y}_{uj})
\end{equation}
where $\mathcal{O} = \{(u, i, j) : r_{ui} = 1, r_{uj} = 0\}$ is the set of training triplets with observed positive interactions and sampled negative items, $\hat{y}_{ui} = \langle \mathbf{e}_u, \mathbf{e}_i \rangle$ denotes the predicted preference score, and $\sigma(\cdot)$ is the sigmoid function. To prevent overfitting, we apply $L_2$ regularization on the user preference embeddings and fusion weights:
\begin{equation}
\mathcal{L}_{\text{reg}} = \|\mathbf{E}_u^v\|_2^2 + \|\mathbf{E}_u^t\|_2^2 + \|\mathbf{w}\|_2^2
\end{equation}
where $\mathbf{E}_u^v$ and $\mathbf{E}_u^t$ denote the visual and textual user preference embeddings before fusion, and $\mathbf{w}$ represents the learnable modal fusion weights. The final training objective combines the BPR loss with regularization:
\begin{equation}
\mathcal{L} = \mathcal{L}_{\text{BPR}} + \gamma \mathcal{L}_{\text{reg}}
\end{equation}
where $\gamma$ is the regularization coefficient. Crucially, the SVD computation is performed on detached features to prevent gradients from flowing through the projection matrix construction, ensuring that $\mathbf{P}_v$ and $\mathbf{P}_t$ serve as fixed geometric constraints rather than learnable parameters within each epoch. This design allows the model to learn representations that are both discriminative for recommendation and compatible with redundancy suppression.

\begin{table}[t]
\centering
\caption{Statistics of the experimental datasets.}
\label{tab:datasets}
\begin{tabular}{lcccc}
\toprule
\textbf{Dataset} & \textbf{\#Users} & \textbf{\#Items} & \textbf{\#Inters} & \textbf{Sparsity} \\
\midrule
Baby     & 19,445 & 7,050  & 160,792 & 99.88\% \\
Sports   & 35,598 & 18,357 & 296,337 & 99.95\% \\
Clothing & 39,387 & 23,033 & 278,677 & 99.97\% \\
\bottomrule
\end{tabular}
\end{table}

\section{Experiments}

\begin{table*}[!ht]
    \centering
    \caption{Performance comparison of baselines and our framework in terms of Recall@K (R@\textit{K}), and NDCG@K (N@\textit{K}).}
    \tabcolsep=0.1cm
    \vskip -0.1in
    \label{tab:overall}
    \resizebox{0.92 \textwidth}{!}{ 
    \begin{tabular}{r|cccc|cccc|cccc}
    \toprule[1.2pt]
         \textbf{Datasets}&  \multicolumn{4}{c}{\textbf{Baby}}&  \multicolumn{4}{|c}{\textbf{Sports}}&  \multicolumn{4}{|c}{\textbf{Clothing}}\\
         \midrule
         \midrule
         \textbf{Methods} & \textbf{R@10} & \textbf{R@20} & \textbf{N@10} & \textbf{N@20} & \textbf{R@10} & \textbf{R@20} & \textbf{N@10} & \textbf{N@20} & \textbf{R@10} & \textbf{R@20} & \textbf{N@10} & \textbf{N@20} \\ 
         \midrule
         MF-BPR & 0.0357& 0.0575& 0.0192& 0.0249& 0.0432& 0.0653& 0.0241& 0.0298 & 0.0187 & 0.0279 & 0.0103 & 0.0126\\
         LightGCN & 0.0479& 0.0754& 0.0257& 0.0328& 0.0569& 0.0864& 0.0311& 0.0387 & 0.0340 & 0.0526 & 0.0188 & 0.0236\\
         LayerGCN & 0.0529& 0.0820& 0.0281& 0.0355& 0.0594& 0.0916& 0.0323& 0.0406& 0.0371& 0.0566& 0.0200& 0.0247\\
         \midrule
         VBPR &  0.0423&  0.0663&  0.0223&  0.0284&  0.0558&   0.0856& 0.0307& 0.0384& 0.0281& 0.0415& 0.0158& 0.0192\\
         MMGCN &  0.0378&  0.0615&  0.0200&  0.0261&  0.0370&   0.0605& 0.0193& 0.0254 & 0.0218& 0.0345& 0.0110& 0.0142\\
         DualGNN &  0.0448&  0.0716&  0.0240&  0.0309&  0.0568&   0.0859& 0.0310& 0.0385& 0.0454& 0.0683& 0.0241& 0.0299\\
         LATTICE &  0.0547&  0.0850&  0.0292&  0.0370&  0.0620&  0.0953&  0.0335&  0.0421& 0.0492& 0.0733& 0.0268& 0.0330\\
         SLMRec & 0.0529& 0.0775& 0.0290& 0.0353& 0.0663&   0.0990& 0.0365& 0.0450 &0.0452& 0.0675& 0.0247& 0.0303\\
         BM3 & 0.0564& 0.0883& 0.0301& 0.0383& 0.0656& 0.0980& 0.0355& 0.0438& 0.0422& 0.0621& 0.0231& 0.0281\\
         MMSSL & 0.0613& 0.0971& 0.0326& 0.0420& 0.0673& 0.1013& 0.0380& 0.0474& 0.0531& 0.0797& 0.0291& 0.0359\\
         FREEDOM & 0.0627& 0.0992& 0.0330& 0.0424& 0.0717& 0.1089& 0.0385& 0.0481& 0.0629& 0.0941& 0.0341& 0.0420\\
         MGCN & 0.0620& 0.0964& 0.0339& 0.0427& 0.0729 & 0.1106 & 0.0397 & 0.0496 & 0.0641 & 0.0945 & 0.0347 & 0.0428 \\
         LGMRec & 0.0644 & 0.1002& \underline{0.0349}& 0.0440 &0.0720 & 0.1068& 0.0390 & 0.0480  & 0.0555 & 0.0828 & 0.0302& 0.0371 \\
          MENTOR & \underline{0.0648}& \underline{0.1035}& \underline{0.0349}& \underline{0.0448}& \underline{0.0736}& \underline{0.1130}& \underline{0.0399}& \underline{0.0501} & \underline{0.0659}& \underline{0.0976}& \underline{0.0357}& \underline{0.0437}\\
         \rowcolor{black!15} CLEAR &  \textbf{0.0687} &  \textbf{0.1045}& \textbf{0.0364} &  \textbf{0.0456} &  \textbf{0.0756}&  \textbf{0.1146} &\textbf{0.0411} & \textbf{0.0511}
         &  \textbf{0.0675}&  \textbf{0.1001} &\textbf{0.0358} & \textbf{0.0440}  \\
         \bottomrule[1.2pt]
    \end{tabular}}
\end{table*}

\subsection{Experimental Settings}
In this section, we conduct extensive experiments to verify our assumptions and demonstrate the effectiveness, generalization, and efficiency. Our experimental results provide solid answers to the following key questions.

\begin{itemize}
    \item \textbf{RQ1:} Does our CLEAR architecture consistently outperform state-of-the-art methods?
    \item \textbf{RQ2:} Do our design principles hold in practice?
    \item \textbf{RQ3:} What is the impact of hyperparameters on performance?
    \item \textbf{RQ4:} What are the computational costs, general applicability, and representation improvements of CLEAR?
\end{itemize}

\subsubsection{Datasets.}
We conduct empirical studies on three widely used datasets from Amazon review data~\cite{mcauley2015image}: \textbf{Baby}, \textbf{Sports}, and \textbf{Clothing}, which include adequate visual and information.Table~\ref{tab:datasets} summarizes the statistics of these datasets. Following previous works~\cite{wei2019mmgcn,he2016vbpr}, we apply 5-core filtering to ensure data quality. Furthermore, the pre-extracted 4096-dimensional visual features are utilized, and 384-dimensional textual features are extracted using a pre-trained sentence transformer. Appendix ~\ref{app:dataset} provides details on data collection.

\subsubsection{Baselines}
We compared our CLEAR model with fourteen classic recommendation models, including three traditional collaborative filtering-based models: MF-BPR~\cite{rendle2012bpr}, LightGCN~\cite{he2020lightgcn}, and LayerGCN~\cite{zhou2023layer}; and eleven multimodal recommendation models: VBPR~\cite{he2016vbpr}, MMGCN~\cite{wei2019mmgcn}, DualGNN~\cite{wang2021dualgnn}, LATTICE~\cite{zhang2021LATTICE}, SLMRec~\cite{tao2022self}, BM3~\cite{zhou2023bootstrap}, MMSSL~\cite{wei2023multi}, FREEDOM~\cite{zhou2023tale}, MGCN~\cite{yu2023multi}, LGMRec~\cite{guo2024lgmrec}, and MENTOR~\cite{xu2025mentor}. The full list of baselines and their implementation details are outlined in Appendix~\ref{app:baselines}.

\subsubsection{Evaluation Protocals:}
To ensure a fair evaluation of our approach, we adopt Recall@K and NDCG@K with $K \in \{10, 20\}$. We randomly split the data into training, validation and test sets in an 8:1:1 distribution.

\subsubsection{Implementation Details.}
We leverage the MMRec~\cite{zhou2023mmrec} open-source framework for model development and reproduction of various baselines. Our experiments are conducted using PyTorch and conduct all experiments on a single NVIDIA RTX 4090 GPU with 24GB memory. 
For our model, we set the embedding dimension to 64 for both user and item embeddings. All training parameters were initialized using the Xavier~\cite{glorot2010understanding}, and the Adam optimizer~\cite{kinga2015method} was adopted during training.
For the null-space projection module, we search the projection strength $\lambda$ in $\{0.3, 0.5, 0.7, 0.9\}$ and the redundancy rank $k$ in $\{2, 4, 8, 16, 20\}$. The optimal hyperparameters are selected based on performance on the validation set.

\subsection{Overall Performance(RQ1)}

Table~\ref{tab:overall} presents the performance comparison of our method against all baselines on three datasets. We have the following observations:





\textbf{Methods considering both modality fusion and disentanglement outperform those focusing solely on alignment}
Contrastive learning methods like SLMRec, BM3, and MMSSL learn multimodal representations by pulling cross-modal features closer, achieving competitive performance over basic multimodal architectures. However, methods that explicitly balance modality alignment with disentanglement consistently demonstrate superior performance in comparison to alignment-only approaches. For instance, MGCN employs a dual-channel architecture that explicitly separates shared and unique representations, thereby outperforming alignment-based methods across various datasets.

\textbf{Our method performs better than both alignment-based and disentanglement-based approaches.}
Our approach simultaneously incorporates two core principles while transcending the traditional trade-off framework: it neither enforces alignment nor presumes shared-specific boundaries. Instead, it directly identifies and suppresses redundant directions through data-driven covariance analysis. Across various datasets, using the R@20 metric as an example, CLEAR achieves an improvement ranging from 7.6\% to 25.6\% over MMSSL and from 3.6\% to 5.9\% over MGCN. The findings of this study demonstrate that the precise localisation and suppression of redundant directions is a more efficacious approach than either reinforcing alignment or relying on predefined boundaries.

\textbf{Our method achieves state-of-the-art performance across all datasets with consistent improvements.}
Compared to the strongest baseline on each dataset, our method achieves improvements of 6.0\% (Baby), 2.7\% (Sports), and 5.3\% (Clothing) on Recall@10, with NDCG metrics showing similar trends. The three datasets exhibit diverse characteristics in scale and sparsity. The consistent gains across these diverse scenarios indicate that cross-modal redundancy is a pervasive phenomenon that existing methods have not adequately addressed, and our redundancy suppression strategy generalizes well regardless of dataset properties.

\subsection{Ablation Study(RQ2)}
\label{sec:ablation}

To validate our key design choices, we conduct ablation experiments examining three configurations: (1) full model with null-space projection using fixed top-$k$, (2) removing null-space projection entirely, and (3) replacing fixed $k$ with dynamic ratio-based selection that determines $k$ based on singular value energy distribution. Figure~\ref{fig:ablation} summarizes the results.

\begin{figure}[t]
    \centering
    \includegraphics[width=\linewidth]{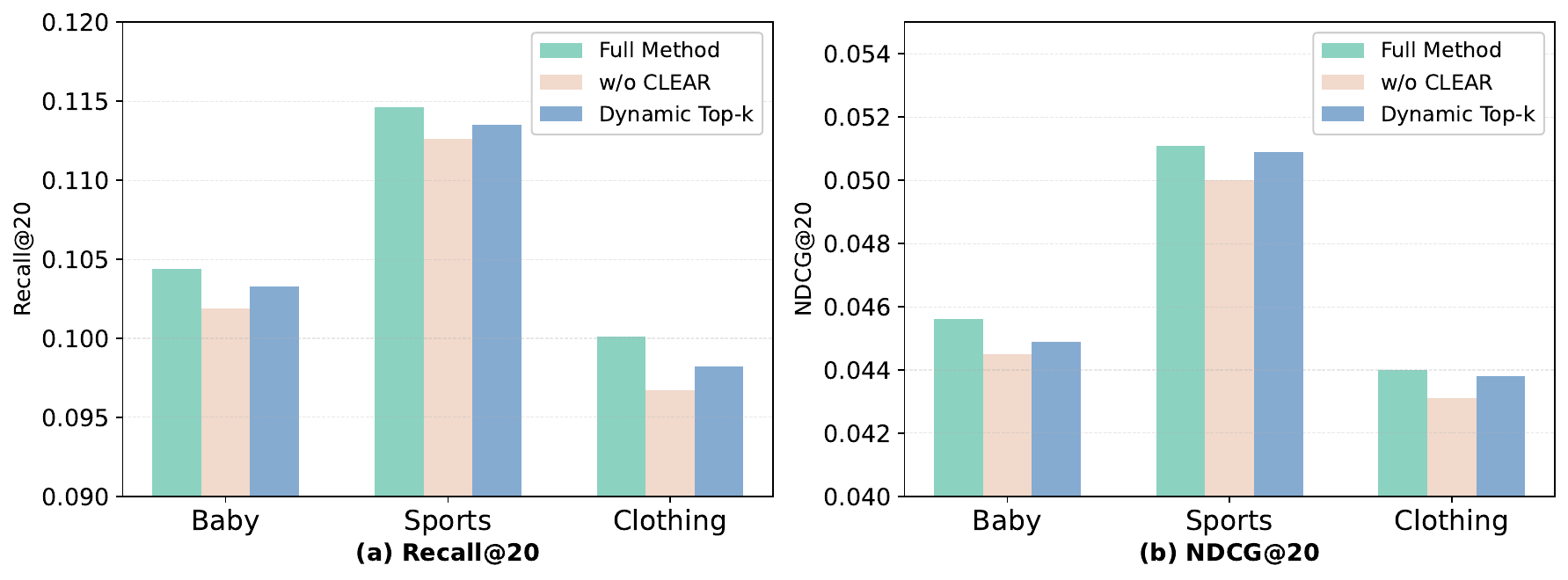}
    \vspace{-0.4cm}
    \caption{Ablation study comparing variants on three datasets. The bar chart shows R@20 and N@20 metrics for: Full Model (with null-space projection and fixed top-$k$), w/o null-space, and Dynamic Ratio (adaptive ratio-based selection).}
    \label{fig:ablation}
    \vspace{-0.2cm}
\end{figure}

\textbf{The Role of Null-Space Projection.}
As shown in Figure~\ref{fig:ablation}, removing null-space projection (w/o null-space) leads to consistent performance drops across all datasets. On Recall@20, the degradation reaches 2.4\% on Baby, 1.7\% on Sports, and 3.4\% on Clothing; NDCG@20 exhibits similar trends with drops of 2.0\%--2.4\%. These results validate our core hypothesis: cross-modal redundancy impedes recommendation performance by allowing modal-generic factors to overshadow modality-specific discriminative information. The null-space projection effectively suppresses these redundant directions, enabling the model to better leverage the unique contributions of each modality.

\textbf{Fixed Top-$k$ vs. Dynamic Ratio Selection.}
The full method with fixed $k$ consistently outperforms dynamic ratio selection. The underlying reason is that dynamic selection treats the redundancy rank as a training-dependent quantity, creating a moving optimization target as the singular value spectrum evolves. This leads to instability where the model continuously adapts to shifting projection subspaces. In contrast, fixed $k$ provides a constant regularization target throughout training, allowing the model to learn stable representations. The results support that cross-modal redundancy concentrates in a low-dimensional subspace whose dimensionality is an intrinsic property of the data domain rather than a transient characteristic of training dynamics.

\begin{table}[t]
\centering
\caption{Generalization analysis on Baby dataset.}
\label{tab:generalization}
\begin{tabular}{lcccc}
\toprule
\textbf{Model} & \textbf{R@10} & \textbf{R@20} & \textbf{N@10} & \textbf{N@20} \\
\midrule
VBPR & 0.0423 & 0.0663 & 0.0223 & 0.0284 \\
VBPR + CLEAR & 0.0469 & 0.0776 & 0.0258 & 0.0336 \\
\rowcolor{gray!15} \textit{Improv.} & \textit{10.87\%} & \textit{17.04\%} & \textit{15.70\%} & \textit{18.31\%} \\
\midrule
FREEDOM & 0.0624 & 0.0984 & 0.0328 & 0.0420 \\
FREEDOM + CLEAR & 0.0634 & 0.0992 & 0.0337 & 0.0427 \\
\rowcolor{gray!15} \textit{Improv.} & \textit{1.60\%} & \textit{0.81\%} & \textit{2.74\%} & \textit{1.67\%} \\
\bottomrule
\end{tabular}
\end{table}

\subsection{Hyperparameter Analysis(RQ3)}

\subsubsection{Effect of Projection Strength $\lambda$.}
Both datasets show consistent improvement as $\lambda$ increases, with optimal performance at $\lambda = 0.9$. This validates that stronger redundancy suppression generally benefits recommendation by allowing modality-specific semantics to dominate. The stability at high $\lambda$ values indicates that our method effectively suppresses modal-generic factors without removing task-relevant information.
\subsubsection{Effect of Redundancy Rank $k$.}
The heatmaps reveal an inverted-U relationship: very small $k$ under-suppresses redundancy, while excessively large $k$ risks removing useful information. Optimal values differ by dataset---Clothing favors $k=4$ while Sports prefers $k=8$---reflecting domain-specific redundancy structures. Notably, performance remains stable across moderate $k$ ranges, demonstrating robustness to hyperparameter selection.

\begin{figure}[t]
    \centering
    \includegraphics[width=\linewidth]{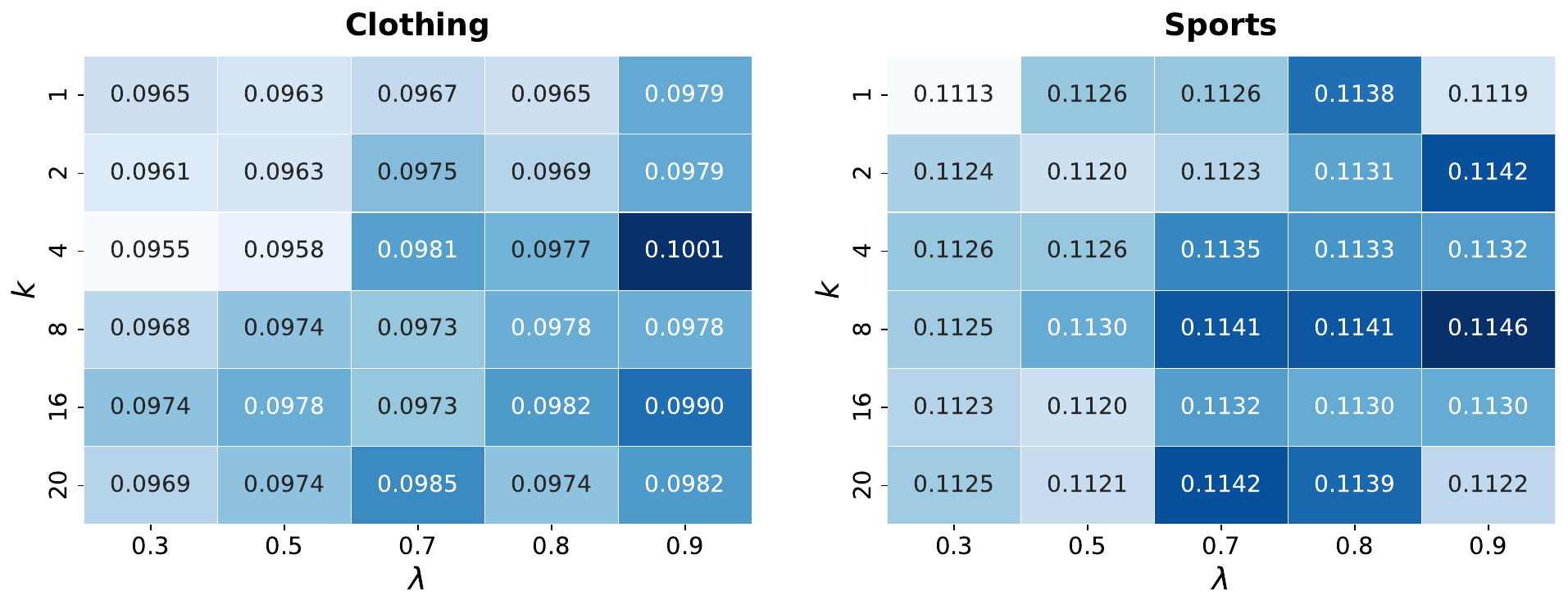}
    \vspace{-0.45cm}
    \caption{Hyperparameter analysis on Clothing and Sports datasets. Heatmaps show R@20 performance across different $\lambda$ and $k$ combinations. Darker colors indicate better results.}
    \label{fig:hyperparameter}
    \vspace{-0.45cm}
\end{figure}

\subsection{Framework efficiency, generalization and visualization Analysis(RQ4)}

\subsubsection{Computational Efficiency}

We analyze the computational overhead introduced by our null-space projection module. The time complexity is dominated by SVD decomposition $O(d^3)$ and feature projection $O(Nd^2)$, which is negligible since $d \ll N$ and the projection matrix is cached. The spatial complexity adds only $O(d^2)$ for storing two projection matrices. Algorithm ~\ref{alg:training} summarizes the training workflow of our method. Detailed experimental results on computational complexity are presented in the Table ~\ref{effciency}. The results show that our model is competitive in terms of both training efficiency and memory usage.


\subsubsection{Generalization analysis}
To verify that our null-space projection is a general framework applicable to various backbones, we integrate it into two representative multimodal recommendation models with distinct architectures: Table~\ref{tab:generalization} presents the results on the Baby dataset. We observe that both VBPR and FREEDOM have different performance improvements after null-space projection, and FREEDOM’s performance improvement is slightly lower due to its own powerful multi-graph convolutional structure.

\begin{figure}[t]
\centering
\includegraphics[width=\linewidth]{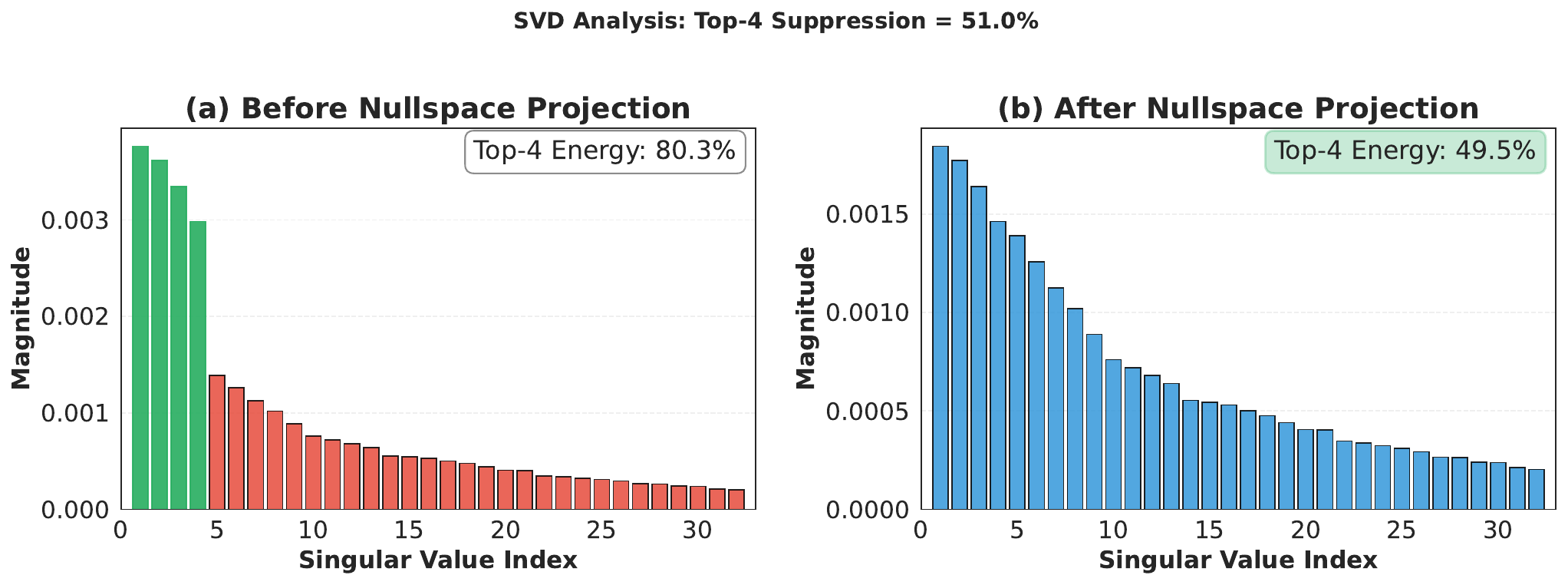}
\caption{Singular value distribution of cross-modal covariance matrix before and after null-space projection.}
\label{fig:svd_compare}
\vspace{-0.2cm}
\end{figure}

\subsubsection{Visualization Analysis}
To intuitively understand how null-space projection affects the learned representations, we provide a visualization analysis. \textbf{Singular Value Distribution.} Figure~\ref{fig:svd_compare} reveals a long-tailed singular value distribution on the Baby dataset; after null-space projection, the dominant singular values are substantially suppressed, and the energy distribution becomes markedly flatter. \textbf{t-SNE~\cite{maaten2008visualizing} Visualization.}
Figure~\ref{fig:tsne} illustrates the distribution of projects before and after null-space projection. The top and right display the marginal distributions. Through overlap and SWD metrics, we can visually observe that our CLEAR method significantly reduces redundancy at the representation level. Theorem~\ref{theorem:swd} proves this assertion.

\section{Related Works}

\subsection{Model Architectures in MMRec}
Model-level methods focus on designing sophisticated architectures to capture high-order interactions between users, items, and multimodal features. 
Early endeavors primarily utilized Graph Convolutional Networks (GCNs) to propagate multimodal signals. For example, MMGCN~\cite{wei2019mmgcn} constructs modality-specific bipartite graphs, while LATTICE~\cite{zhang2021LATTICE} learns latent item-item structures to augment original interaction graph. 
However, recent studies have begun to question the necessity of complex graph structures. FREEDOM~\cite{zhou2023tale} observes that raw multimodal graphs often contain noise and proposes a degree-guided denoising strategy to improve robustness. 
Moving beyond GCNs, TMLP~\cite{huang2025beyond} abandons expensive graph convolutions in favor of a lightweight MLP-based architecture, demonstrating that efficient feature transformation can outperform complex message passing if the multimodal signals are properly utilized. 
Despite their architectural innovations, these model-level methods often treat multimodal features as static inputs, failing to address the inherent semantic redundancy during the training process.

\subsection{Optimization Strtegies in MMRec}
Recently, a significant shift has occurred toward optimization-level innovations, focusing on how objective functions and training paradigms shape the representation space. 
Contrastive learning has become the dominant paradigm. Methods like SLMRec~\cite{tao2022self} and BM3~\cite{zhou2023bootstrap} employ self-supervised objectives to align modality-specific representations with collaborative signals. 
Furthermore, the field is evolving towards ID-free or ID-agnostic recommendation to mitigate the cold-start problem and ID-bias~\cite{yuan2023go}. Some works~\cite{li2025id} advocate for removing ID embeddings entirely or using softmax-based alignment losses instead of traditional BPR loss to achieve better cross-modal convergence. 
To handle the phenomenon of modality collapse~\cite{chaudhuri2025closer}, disentanglement techniques have been introduced. MGCN~\cite{yu2023multi} and Dual-Alignment methods attempt to separate modality-shared preferences from modality-specific ones. 
However, these optimization strategies typically rely on manually designed auxiliary losses or complex orthogonality constraints, which often lead to a trade-off between alignment and discriminability. Our work differs by introducing a geometric perspective to eliminate redundancy via null-space projection without sacrificing the integrity of the primary task.

\begin{figure}[t]
\centering
\includegraphics[width=\linewidth]{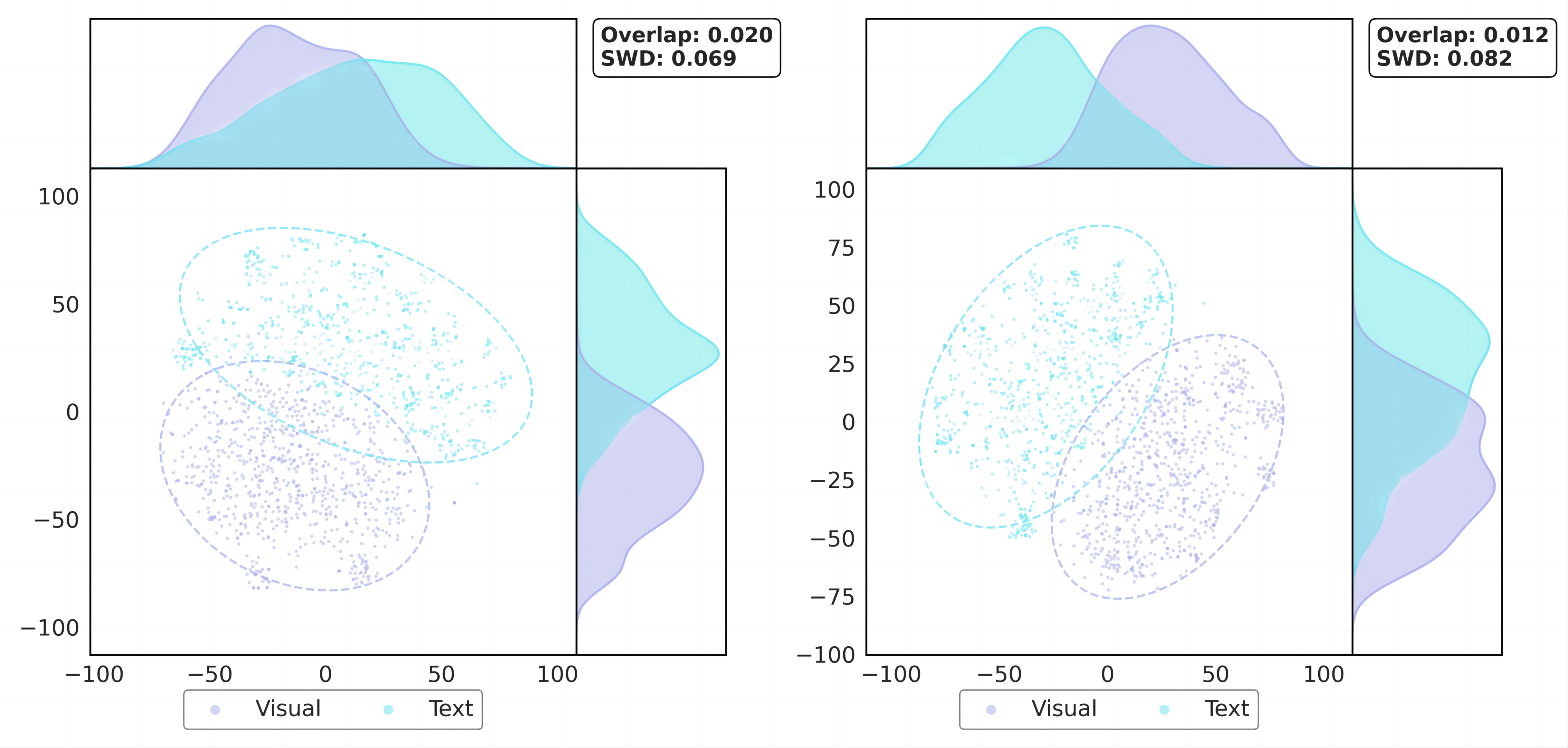}
\caption{t-SNE visualization on the Baby dataset. Our Null-space Projection reduces Overlap and increases Sliced Wasserstein Distance(SWD) significantly.}
\label{fig:tsne}
\vspace{-0.5cm}
\end{figure}

\subsection{Null-Space Projection in Machine Learning}
Null-space projection, which projects vectors onto the orthogonal complement of a given subspace, has demonstrated effectiveness across diverse machine learning applications, including fairness learning and representation learning. For instance, in the recommendation domain, 
AlphaFuse~\cite{hu2025alphafuse} decomposes language embedding space into a semantically-rich row space and a semantically-sparse null-space via SVD, injecting collaborative signals into the null-space to preserve semantic information while incorporating behavioral patterns. For fairness in classification, INLP~\cite{ravfogel2020null} iteratively trains linear classifiers to predict sensitive attributes and projects representations onto the null-space of classifier weights, effectively removing discriminatory correlations while preserving task-relevant information. In the context of large language models, AlphaEdit~\cite{fang2024alphaedit} projects parameter perturbations onto the null-space of preserved knowledge representations during knowledge editing, ensuring that updates do not alter outputs for existing facts.
In this paper, we introduce null-space projection to multimodal recommendation for cross-modal redundancy elimination. 


\section{Conclusion}

In this paper, we identify and formalize the cross-modal redundancy problem in multimodal recommendation. To address this issue, we propose CLEAR, which suppresses redundancy through null-space projection, encouraging visual and textual modalities to learn complementary rather than repetitive representations. Our approach requires no predefinition of redundant or specific information boundaries, achieves redundancy suppression through pure geometric projection without auxiliary losses, and serves as a plug-and-play module applicable to any multimodal framework. Experiments on three datasets validate consistent performance improvements. Our approach can be applied to tasks involving more than two modalities. Beyond multimodal recommendation, our method can also be used for cross-modal retrieval, where cross-modal redundancy similarly impedes complementary learning. Future work may involve more precise cross-modal information construction and decomposition.

\bibliographystyle{ACM-Reference-Format}
\balance
\bibliography{Nullspace}

\clearpage

\appendix

\section{Proof of THEOREMS}
\subsection{Sliced Wasserstein Distance (SWD)}
\label{theorem:swd}
SWD measures the overall divergence between distributions. Given two sets of $d$-dimensional features $\mathbf{V} = \{\mathbf{v}_i\}_{i=1}^N$ and $\mathbf{T} = \{\mathbf{t}_i\}_{i=1}^N$, SWD approximates the Wasserstein distance through random one-dimensional projections:
\begin{equation}
    \text{SWD}(\mathbf{V}, \mathbf{T}) = \frac{1}{L}\sum_{l=1}^{L} W_1\left(\boldsymbol{\theta}_l^\top \mathbf{V}, \boldsymbol{\theta}_l^\top \mathbf{T}\right), \quad \boldsymbol{\theta}_l \in \mathbb{S}^{d-1}
    \label{eq:swd}
\end{equation}
where $\{\boldsymbol{\theta}_l\}_{l=1}^{L}$ are random unit vectors sampled uniformly from the $(d-1)$-dimensional unit sphere, and $W_1(\cdot, \cdot)$ denotes the 1-Wasserstein distance between one-dimensional projected distributions, computed via sorted sample quantiles. Higher SWD values indicate greater distributional divergence, implying that each modality preserves more distinctive information.

\subsection{Redundancy Harms Recommendation: A Three-Fold Analysis}
\label{harmful}
\paragraph{(i) Representation Collapse.}

\begin{theorem}[Discriminability Degradation]
\label{thm:discriminability}
Decompose features as $\mathbf{v} = \mathbf{v}_{\parallel} + \mathbf{v}_{\perp}$ where $\mathbf{v}_{\parallel} = \mathbf{U}_k\mathbf{U}_k^\top\mathbf{v}$ lies in redundant subspace. For items from different classes $c_i \neq c_j$, the inter-class separation measured by Fisher's Linear Discriminant satisfies:
\begin{equation}
    \text{FDR}_{\parallel} := \frac{\|\mathbb{E}_{i \in c_1}[\mathbf{v}_{i,\parallel}] - \mathbb{E}_{j \in c_2}[\mathbf{v}_{j,\parallel}]\|^2}{\text{Var}_{i \in c_1}[\mathbf{v}_{i,\parallel}] + \text{Var}_{j \in c_2}[\mathbf{v}_{j,\parallel}]} \ll \text{FDR}_{\perp}
\end{equation}
where $\text{FDR}_{\perp}$ is computed analogously for $\mathbf{v}_{\perp}$.
\end{theorem}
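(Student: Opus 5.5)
Because the ``$\ll$'' in Theorem~\ref{thm:discriminability} has no content without assumptions, the plan is to make the hidden ones explicit in a simple generative model, prove a quantitative FDR bound there, and read the theorem as its limiting form. Write each item feature as
\[
\mathbf{v}_i = \mathbf{U}_k \mathbf{s}_i + \mathbf{m}_{c_i} + \boldsymbol{\epsilon}_i ,
\]
with three pieces. The first is a modality-generic shared factor $\mathbf{s}_i\in\mathbb{R}^k$. It is the component that co-varies with the textual side, so it is what drives the top singular directions of $\mathbf{C}=\frac{1}{N}\bar{\mathbf{V}}^\top\bar{\mathbf{T}}$. The second is a class-specific mean $\mathbf{m}_{c}$. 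The third is isotropic noise $\boldsymbol{\epsilon}_i$ with variance $\eta^2$ per coordinate.

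The key assumption formalizes the paper's description of redundancy as ``modality-generic'', i.e.\ of limited discriminative power. It says that the shared factor carries little class information but much variance. Concretely, $\|\mathbb{E}[\mathbf{s}\mid c_1]-\mathbb{E}[\mathbf{s}\mid c_2]\|\le\delta$ while $\mathrm{tr}\,\mathrm{Cov}(\mathbf{s}\mid c)\ge \kappa$ with $\kappa$ large. It also requires the class means to lie essentially in the complement, $\|\mathbf{U}_k^\top(\mathbf{m}_{c_1}-\mathbf{m}_{c_2})\|\le\delta'$.

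The calculation then proceeds in three steps.
\begin{enumerate}
\item Apply the projectors $\mathbf{U}_k\mathbf{U}_k^\top$ and $\mathbf{I}-\mathbf{U}_k\mathbf{U}_k^\top$. Their idempotence, and the fact that they annihilate each other, split every mean and covariance cleanly: $\mathbf{v}_{\parallel}=\mathbf{U}_k(\mathbf{s}+\mathbf{U}_k^\top\mathbf{m}+\mathbf{U}_k^\top\boldsymbol{\epsilon})$ and $\mathbf{v}_{\perp}=\mathbf{P}^\perp_v(\mathbf{m}+\boldsymbol{\epsilon})$.
\item Compute the numerators. The parallel numerator is at most $(\delta+\delta')^2$, and the perpendicular numerator is at least $\|\mathbf{m}_{c_1}-\mathbf{m}_{c_2}\|^2-\delta'^2$.
\item Compute the denominators, taking ``Var'' as the trace of the covariance. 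The parallel denominator is at least $2(\kappa+k\eta^2)$, and the perpendicular denominator equals $2(d-k)\eta^2$.
\end{enumerate}
Combining these gives
\[
\frac{\mathrm{FDR}_{\parallel}}{\mathrm{FDR}_{\perp}}\le
\frac{(\delta+\delta')^2\,(d-k)\eta^2}{(\|\Delta\mathbf{m}\|^2-\delta'^2)(\kappa+k\eta^2)},
\]
which tends to $0$ as $\delta,\delta'\to 0$ or as $\kappa/\eta^2\to\infty$. That limit is the content of ``$\ll$''.

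The main obstacle is justifying that the SVD directions $\mathbf{U}_k$ coincide with the shared-factor subspace. Only then does the large cross-modal variance $\kappa$ land in $\mathbf{v}_{\parallel}$ and the class means fall mostly outside it. To handle this, I would use the paired textual model $\mathbf{t}_i=\mathbf{W}_k\mathbf{s}_i'+\dots$ with $\mathrm{Cov}(\mathbf{s},\mathbf{s}')$ of rank $k$ and dominant. Then $\mathbf{C}\approx \mathbf{U}_k\,\mathrm{Cov}(\mathbf{s},\mathbf{s}')\,\mathbf{W}_k^\top+\mathbf{E}$, and Wedin's $\sin\Theta$ theorem bounds the angle between the empirical and true subspaces by $\|\mathbf{E}\|/(\sigma_k-\sigma_{k+1})$. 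This perturbation also leaks a small amount of class-mean energy into the parallel part, but that only adds $O(\sin\Theta\cdot\|\Delta\mathbf{m}\|)$ to $\delta'$. The bound above therefore survives, and I would state the theorem as this inequality rather than prove ``$\ll$'' unconditionally.
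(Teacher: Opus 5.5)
Your route is genuinely different from the paper's. Your calculation is correct when $\mathbf{U}_k$ is exactly the shared-factor subspace.

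\textbf{Comparison with the paper.} The paper's proof is a short spectral heuristic. It asserts that inside $\mathrm{span}(\mathbf{U}_k)$ the within-category variance is $\approx\sigma_1^2$ and the between-category variance is $O(\sigma_k^2)$, and concludes that $\mathrm{FDR}_{\parallel}=O(\sigma_k^2/\sigma_1^2)\to 0$ as the spectral gap grows. Those two variance claims are not derived. They are simply the hypothesis that the shared directions are class-uninformative, phrased through the spectrum of $\mathbf{C}$. Also, $\mathrm{FDR}_{\perp}$ is never bounded below, so the comparison ``$\ll\mathrm{FDR}_{\perp}$'' is not actually addressed.

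You make the same hypothesis explicit in a latent-factor model ($\delta,\delta'$ small, $\kappa$ large). You then use idempotence and mutual annihilation of the two projectors to compute both Fisher ratios, and obtain an explicit bound on $\mathrm{FDR}_{\parallel}/\mathrm{FDR}_{\perp}$. Your numerator and denominator bounds check out, provided $\mathbf{s}$ and $\boldsymbol{\epsilon}$ are uncorrelated and the class means are deterministic. The paper's version is stated in terms of the observable spectrum of $\mathbf{C}$, which ties directly to the choice of $k$. Yours gives a checkable inequality that genuinely compares the two subspaces, at the price of depending on latent parameters.

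\textbf{The perturbation step needs repair.} You claim that, after replacing $\mathbf{U}_k$ by the empirical subspace $\hat{\mathbf{U}}_k$, the bound survives with only $\delta'$ inflated by $O(\sin\Theta\,\|\Delta\mathbf{m}\|)$. But misalignment also leaks the shared factor into the complement. The perpendicular part now contains $(\mathbf{I}-\hat{\mathbf{U}}_k\hat{\mathbf{U}}_k^\top)\mathbf{U}_k\mathbf{s}$. This adds up to $\sin^2\Theta\cdot\mathrm{tr}\,\mathrm{Cov}(\mathbf{s}\mid c)$ to each within-class variance in the denominator of $\mathrm{FDR}_{\perp}$, with $\Theta$ the largest principal angle. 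You only lower-bound $\mathrm{tr}\,\mathrm{Cov}(\mathbf{s}\mid c)$ by $\kappa$, and your limit of interest is $\kappa/\eta^2\to\infty$. So this term can swamp $(d-k)\eta^2$ unless you also show $\sin^2\Theta\cdot\mathrm{tr}\,\mathrm{Cov}(\mathbf{s}\mid c)\ll(d-k)\eta^2$.

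This is fixable, for example by assuming a well-conditioned shared factor. The Wedin gap $\sigma_k(\mathrm{Cov}(\mathbf{s},\mathbf{s}'))$ then scales like $\kappa$. The part of the perturbation that rotates the visual subspace, namely sampling cross terms such as $\widehat{\mathrm{Cov}}(\boldsymbol{\epsilon},\mathbf{s}')$, scales only like $\sqrt{\kappa}$. Hence $\sin^2\Theta\cdot\kappa$ stays bounded as $\kappa$ grows.

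\textbf{An additional hypothesis to state.} If the text also carries class information, the perturbation term contains the between-class cross-covariance of the visual and textual class means. Your dominance hypothesis on $\mathrm{Cov}(\mathbf{s},\mathbf{s}')$ must cover that term; otherwise $\hat{\mathbf{U}}_k$ absorbs class-mean directions and $\delta'$ is no longer small. This is exactly the regime the paper's narrative describes (``the redundant subspace captures category-level semantics''), so it should be listed among your hypotheses.
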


\begin{proof}
The redundant subspace captures category-level semantics shared across modalities. Within-category variance dominates between-category variance in $\text{span}(\mathbf{U}_k)$:
\begin{equation}
    \text{Var}_{\text{within}}[\mathbf{v}_{\parallel}] = \mathbb{E}_c[\text{Var}_{i \in c}[\mathbf{U}_k^\top\mathbf{v}_i]] \approx \sigma_1^2
\end{equation}
\begin{equation}
    \text{Var}_{\text{between}}[\mathbf{v}_{\parallel}] = \text{Var}_c[\mathbb{E}_{i \in c}[\mathbf{U}_k^\top\mathbf{v}_i]] = O(\sigma_k^2) \ll \sigma_1^2
\end{equation}
Thus $\text{FDR}_{\parallel} = O(\sigma_k^2/\sigma_1^2) \to 0$ as spectral gap increases.
\end{proof}

\paragraph{(ii) Fusion Imbalance.}

\begin{proposition}[Redundancy Amplification in Fusion]
For weighted fusion $\mathbf{h} = \alpha\mathbf{v} + \beta\mathbf{t}$ with $\alpha, \beta > 0$, decompose $\mathbf{v} = \mathbf{v}_{\parallel} + \mathbf{v}_{\perp}$ and $\mathbf{t} = \mathbf{t}_{\parallel} + \mathbf{t}_{\perp}$. The expected energy distribution is:
\begin{equation}
    \mathbb{E}[\|\mathbf{h}\|^2] = \underbrace{(\alpha^2 + \beta^2 + 2\alpha\beta\rho)\mathbb{E}[\|\mathbf{v}_{\parallel}\|^2]}_{\text{redundant: amplified}} + \underbrace{\alpha^2\mathbb{E}[\|\mathbf{v}_{\perp}\|^2] + \beta^2\mathbb{E}[\|\mathbf{t}_{\perp}\|^2]}_{\text{specific: preserved}}
\end{equation}
where $\rho = \mathbb{E}[\langle\mathbf{v}_{\parallel}, \mathbf{t}_{\parallel}\rangle]/(\|\mathbf{v}_{\parallel}\|\|\mathbf{t}_{\parallel}\|) \approx 1$ for redundant components.
\end{proposition}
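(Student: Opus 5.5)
The plan is to expand $\|\mathbf{h}\|^2$ directly, take expectations, and then decide which cross terms vanish and which survive under the orthogonal split induced by the soft projectors of the Null-Space Projection subsection. Write
\begin{equation}
\mathbf{h} = (\alpha\mathbf{v}_{\parallel} + \beta\mathbf{t}_{\parallel}) + (\alpha\mathbf{v}_{\perp} + \beta\mathbf{t}_{\perp}),
\end{equation}
and expand $\|\mathbf{h}\|^2$ into the ten quadratic and bilinear terms:
\begin{itemize}
\item the squared norms $\alpha^2\|\mathbf{v}_{\parallel}\|^2$, $\beta^2\|\mathbf{t}_{\parallel}\|^2$, $\alpha^2\|\mathbf{v}_{\perp}\|^2$, $\beta^2\|\mathbf{t}_{\perp}\|^2$;
\item the redundant--redundant cross term $2\alpha\beta\langle\mathbf{v}_{\parallel},\mathbf{t}_{\parallel}\rangle$;
\item the specific--specific cross term $2\alpha\beta\langle\mathbf{v}_{\perp},\mathbf{t}_{\perp}\rangle$;
\item the four mixed terms pairing a $\parallel$ component with a $\perp$ component.
\end{itemize}

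\textbf{Step 1: the mixed terms.} Within one modality the mixed terms vanish exactly, since $\mathbf{U}_k\mathbf{U}_k^\top$ and $\mathbf{I}-\mathbf{U}_k\mathbf{U}_k^\top$ are complementary orthogonal projectors, so $\langle \mathbf{v}_{\parallel},\mathbf{v}_{\perp}\rangle=0$, and likewise for $\mathbf{t}$ with $\mathbf{W}_k$. The cross-modal mixed terms, e.g.\ $\langle\mathbf{v}_{\parallel},\mathbf{t}_{\perp}\rangle$, do not vanish for free, because $\mathbf{U}_k$ and $\mathbf{W}_k$ span different subspaces. Here I would use the fusion setting of Section~2, where both modalities live in the shared space $\mathbb{R}^d$. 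I would then adopt the modeling identification that the redundant directions coincide across modalities, $\mathrm{span}(\mathbf{U}_k)=\mathrm{span}(\mathbf{W}_k)$; this is the "shared subspace" premise of the low-rank redundancy hypothesis.

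\textbf{Step 2: the specific--specific cross term.} To kill $\mathbb{E}\langle\mathbf{v}_{\perp},\mathbf{t}_{\perp}\rangle$, I would invoke the SVD structure of $\mathbf{C}$. After removing the top-$k$ singular pairs, the residual cross-covariance $\sum_{i>k}\sigma_i\mathbf{u}_i\mathbf{w}_i^\top$ is small. Its trace-type contribution $\mathbb{E}\langle\mathbf{v}_{\perp},\mathbf{t}_{\perp}\rangle$ is therefore negligible for centered features, and I would treat it as zero under the stated approximation. Everything is taken on the mean-centered $\bar{\mathbf{V}},\bar{\mathbf{T}}$, so expectations of inner products are trace quantities of $\mathbf{C}$.

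\textbf{Step 3: the redundant block.} Write $\mathbb{E}\langle\mathbf{v}_{\parallel},\mathbf{t}_{\parallel}\rangle=\rho\,\|\mathbf{v}_{\parallel}\|\|\mathbf{t}_{\parallel}\|$, using the definition of $\rho$ in the statement. Then assume the energy-matching condition $\mathbb{E}\|\mathbf{t}_{\parallel}\|^2=\mathbb{E}\|\mathbf{v}_{\parallel}\|^2$. This is justified because both quantities are governed by the same leading singular values $\sigma_1,\dots,\sigma_k$ when the features are whitened or normalized by the MLP encoders. With this, the three redundant terms collect into $(\alpha^2+\beta^2+2\alpha\beta\rho)\,\mathbb{E}\|\mathbf{v}_{\parallel}\|^2$. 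The remaining terms give the preserved part $\alpha^2\mathbb{E}\|\mathbf{v}_{\perp}\|^2+\beta^2\mathbb{E}\|\mathbf{t}_{\perp}\|^2$.

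\textbf{Step 4: interpretation.} Finally, set $\rho\approx1$ in the coefficient, giving $(\alpha+\beta)^2$. Compare this with the specific coefficients $\alpha^2$ and $\beta^2$: the redundant energy is amplified by the factor $(\alpha+\beta)^2/(\alpha^2+\beta^2)\ge1$, which is the claimed imbalance.

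\textbf{Main obstacle.} The identity is not exact without extra assumptions. It requires the redundant subspaces to coincide across modalities, the cross-covariance between specific parts and between mixed parts to vanish, and the redundant energies to match. Also, $\rho$ as written mixes an expectation with a realized norm product. I would handle this by stating these as explicit standing assumptions at the outset. I would also read the product $\|\mathbf{v}_{\parallel}\|\|\mathbf{t}_{\parallel}\|$ in $\rho$ as the root of the expected energies, via Cauchy--Schwarz, so that the identity is exact under the assumptions and approximate otherwise.
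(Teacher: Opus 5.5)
The paper gives no proof of this proposition: it is stated and followed immediately by the Imbalance Ratio corollary, so there is no argument of the authors' to compare yours with. Your route is the natural way to substantiate the claim: expand $\|\mathbf{h}\|^2$, take expectations, and state the needed hypotheses explicitly. As a conditional derivation it is correct. You are also right that the display is not an identity without extra hypotheses. Reading the denominator of $\rho$ as $\sqrt{\mathbb{E}\|\mathbf{v}_{\parallel}\|^2\,\mathbb{E}\|\mathbf{t}_{\parallel}\|^2}$ is the right repair.

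Two of your standing assumptions can be derived rather than assumed. You already note that expectations of inner products are traces against $\mathbf{C}$. Pushing this one step further: with $\mathbb{E}$ the empirical mean over centered items, $\mathbb{E}[\mathbf{v}\mathbf{t}^\top]=\mathbf{C}$. Hence $\mathbb{E}\langle\mathbf{P}\mathbf{v},\mathbf{Q}\mathbf{t}\rangle=\mathrm{tr}(\mathbf{P}\mathbf{Q}\mathbf{C}^\top)$ for any visual-side projector $\mathbf{P}$ and textual-side projector $\mathbf{Q}$. Orthonormality of the singular vectors then gives, exactly,
\[
\mathbb{E}\langle\mathbf{v}_{\parallel},\mathbf{t}_{\perp}\rangle=\sum_{i>k}\sigma_i\,\mathbf{u}_i^\top\mathbf{U}_k\mathbf{U}_k^\top\mathbf{w}_i=0,
\qquad
\mathbb{E}\langle\mathbf{v}_{\perp},\mathbf{t}_{\parallel}\rangle=\sum_{i\le k}\sigma_i\,\mathbf{u}_i^\top(\mathbf{I}-\mathbf{U}_k\mathbf{U}_k^\top)\mathbf{w}_i=0.
\]
So you do not need $\mathrm{span}(\mathbf{U}_k)=\mathrm{span}(\mathbf{W}_k)$. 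That condition is generically false, and it is not what the paper's low-rank hypothesis asserts; the hypothesis only bounds $\mathrm{rank}(\mathbf{C})$.

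Similarly, $\mathbb{E}\langle\mathbf{v}_{\perp},\mathbf{t}_{\perp}\rangle=\sum_{i>k}\sigma_i\langle\mathbf{u}_i,\mathbf{w}_i\rangle$. This vanishes exactly once $k\ge\mathrm{rank}(\mathbf{C})$, and is otherwise bounded by the tail nuclear norm $\sum_{i>k}\sigma_i$. So ``small residual'' in your Step~2 must mean small in that norm, not merely small $\sigma_{k+1}$.

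Your justification of energy matching in Step~3, however, is incorrect. Let $\mathbf{S}_v=\frac{1}{N}\bar{\mathbf{V}}^\top\bar{\mathbf{V}}$ and $\mathbf{S}_t=\frac{1}{N}\bar{\mathbf{T}}^\top\bar{\mathbf{T}}$. Then $\mathbb{E}\|\mathbf{v}_{\parallel}\|^2=\mathrm{tr}(\mathbf{U}_k^\top\mathbf{S}_v\mathbf{U}_k)$ and $\mathbb{E}\|\mathbf{t}_{\parallel}\|^2=\mathrm{tr}(\mathbf{W}_k^\top\mathbf{S}_t\mathbf{W}_k)$. These are fixed by the within-modality covariances, not by $\sigma_1,\dots,\sigma_k$. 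Cauchy--Schwarz only gives $\sigma_i\le(\mathbf{u}_i^\top\mathbf{S}_v\mathbf{u}_i\,\mathbf{w}_i^\top\mathbf{S}_t\mathbf{w}_i)^{1/2}$.

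Keep energy matching as a bare assumption. It holds, for example, under whitening, where both sides equal $k$, but the paper's MLP encoders do not whiten. Without it, the redundant block is $\alpha^2a^2+\beta^2b^2+2\alpha\beta\rho\,ab$, with $a^2,b^2$ the two redundant energies. At $\rho=1$ this equals $(\alpha a+\beta b)^2$, which still exceeds $\alpha^2a^2+\beta^2b^2$ by $2\alpha\beta ab>0$. That is the amplification your Step~4 is after.
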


\begin{corollary}[Imbalance Ratio]
The energy ratio of redundant to specific components is:
\begin{equation}
    \frac{(\alpha + \beta)^2\mathbb{E}[\|\mathbf{v}_{\parallel}\|^2]}{\alpha^2\mathbb{E}[\|\mathbf{v}_{\perp}\|^2] + \beta^2\mathbb{E}[\|\mathbf{t}_{\perp}\|^2]} \geq 1 + \frac{2\alpha\beta}{\alpha^2 + \beta^2} \geq 1
\end{equation}
Equality holds only when $\alpha = 0$ or $\beta = 0$ (single modality).
\end{corollary}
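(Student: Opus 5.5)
The plan is to read the corollary off the Proposition (Redundancy Amplification in Fusion), combined with one comparison between redundant and specific energies. Set $\rho = 1$, the limiting value the Proposition assigns to redundant components. The redundant coefficient $\alpha^2 + \beta^2 + 2\alpha\beta\rho$ then becomes $(\alpha+\beta)^2$, which gives the numerator as stated. This step also uses the convention, implicit in the Proposition, that $\mathbb{E}[\|\mathbf{t}_{\parallel}\|^2] = \mathbb{E}[\|\mathbf{v}_{\parallel}\|^2]$: the two redundant parts are matched through the paired singular directions $\mathbf{u}_i, \mathbf{w}_i$. I would write $E_\parallel := \mathbb{E}[\|\mathbf{v}_{\parallel}\|^2]$, so that the ratio to bound is $(\alpha+\beta)^2 E_\parallel \big/ \bigl(\alpha^2 \mathbb{E}[\|\mathbf{v}_{\perp}\|^2] + \beta^2 \mathbb{E}[\|\mathbf{t}_{\perp}\|^2]\bigr)$.

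The key step is an upper bound on the denominator. I would state explicitly the dominance condition that the Low-Rank Redundancy Hypothesis supplies: the top-$k$ shared directions carry at least as much per-modality energy as the modality-specific remainder. In symbols, $\mathbb{E}[\|\mathbf{v}_{\perp}\|^2] \le E_\parallel$ and $\mathbb{E}[\|\mathbf{t}_{\perp}\|^2] \le E_\parallel$. Under this condition the denominator is at most $(\alpha^2+\beta^2)E_\parallel$. Hence the ratio is at least
\[
\frac{(\alpha+\beta)^2}{\alpha^2+\beta^2} = 1 + \frac{2\alpha\beta}{\alpha^2+\beta^2}.
\]
The second inequality, that this is at least $1$, is immediate because $\alpha, \beta \ge 0$. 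Equality holds iff $\alpha\beta = 0$, i.e.\ only one modality is used, which is the stated equality case. When $\alpha, \beta > 0$ the inequality is strict; by AM--GM the gain term $2\alpha\beta/(\alpha^2+\beta^2)$ is maximized, with value $1$, at $\alpha = \beta$.

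The main obstacle is justifying the dominance condition. It is not a consequence of the Proposition alone: the ratio can fall below $1$ if the orthogonal complement carries most of the energy. I would first try to derive it from the SVD construction. The $\mathbf{u}_i$ are the top singular directions of $\mathbf{C}$, and by the low-rank structure $\mathrm{rank}(\mathbf{C}) \le c$ the variance of $\bar{\mathbf{V}}$ concentrates in $\mathrm{span}(\mathbf{U}_k)$. This link between cross-covariance energy and marginal variance is plausible but not automatic. If it cannot be made rigorous, I would state dominance as an explicit assumption of the corollary, so that the argument becomes the two-line algebraic bound above. A secondary caveat is that setting $\rho = 1$ is itself an idealization: for general $\rho \in [0,1]$ the same argument gives the weaker lower bound $1 + 2\alpha\beta\rho/(\alpha^2+\beta^2)$, which is still at least $1$.
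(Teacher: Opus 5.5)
Your argument is correct, and it is the route the paper has in mind. The paper gives no proof: it presents the corollary as a direct consequence of the Proposition. That amounts to comparing the coefficient $(\alpha+\beta)^2$ with $\alpha^2+\beta^2$ while silently assuming the redundant energy is at least as large as the specific energies. You have correctly surfaced that hidden hypothesis, and it cannot be dropped. For fixed weights, the first inequality is equivalent to $\alpha^2(E_\parallel-\mathbb{E}[\|\mathbf{v}_{\perp}\|^2])+\beta^2(E_\parallel-\mathbb{E}[\|\mathbf{t}_{\perp}\|^2])\ge 0$. Requiring it for all $\alpha,\beta>0$ (let $\beta/\alpha\to 0$ and $\alpha/\beta\to 0$) is therefore exactly your dominance condition.

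The first branch of your plan would fail: dominance cannot be derived from the SVD and $\mathrm{rank}(\mathbf{C})\le c$. Take orthonormal $\mathbf{q}_1,\mathbf{q}_2$ and independent zero-mean unit-variance scalars $s,z$. At the population level, set $\mathbf{t}=s\,\mathbf{q}_1$ and $\mathbf{v}=s\,\mathbf{q}_1+\kappa z\,\mathbf{q}_2$.

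Then $\mathbf{C}=\mathbf{q}_1\mathbf{q}_1^\top$ has rank one. With $k=1$ you get $\mathbf{v}_{\parallel}=\mathbf{t}_{\parallel}=s\,\mathbf{q}_1$, $\rho=1$, $\mathbf{v}_{\perp}=\kappa z\,\mathbf{q}_2$ and $\mathbf{t}_{\perp}=\mathbf{0}$. The Proposition's energy identity holds exactly: $\mathbb{E}[\|\mathbf{h}\|^2]=(\alpha+\beta)^2+\alpha^2\kappa^2$. Yet the ratio equals $(\alpha+\beta)^2/(\alpha^2\kappa^2)$, which breaks the claimed bound once $\kappa^2>1+\beta^2/\alpha^2$, and drops below $1$ once $\kappa>1+\beta/\alpha$.

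The mechanism is that variance in $\bar{\mathbf{V}}$ uncorrelated with $\bar{\mathbf{T}}$ is invisible to $\mathbf{C}$ but lands in the specific part. No spectral property of $\mathbf{C}$ alone can control the denominator. So take your fallback and state dominance as an explicit assumption of the corollary.

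The closest support in the paper is not the Low-Rank Redundancy Hypothesis, which constrains only $\mathbf{C}$. It is the appendix's Semantic Hierarchy Hypothesis, $\mathbb{E}[\|\mathbf{v}_{\text{coarse}}\|^2]\gg\mathbb{E}[\|\mathbf{v}_{\text{fine}}\|^2]$, combined with the claim $\mathrm{span}(\mathbf{U}_k)\approx\mathrm{span}(\mathbf{U}_{\text{coarse}})$. Even that link is only heuristic.
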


\paragraph{(iii) Optimization Pathology.}

\begin{theorem}[Fast Convergence on Redundant Subspace]
\label{thm:optimization}
Under gradient flow $\frac{d\mathbf{V}}{dt} = -\nabla_{\mathbf{V}}\mathcal{L}$ where $\mathcal{L} = -\sum_{(u,i,j)}\log\sigma(y_{ui} - y_{uj})$, decompose gradients:
\begin{equation}
    \nabla_{\mathbf{V}}\mathcal{L} = \nabla_{\mathbf{V}_{\parallel}}\mathcal{L} + \nabla_{\mathbf{V}_{\perp}}\mathcal{L}
\end{equation}
The convergence rates satisfy:
\begin{equation}
    \|\mathbf{V}_{\parallel}(t) - \mathbf{V}_{\parallel}^*\| \leq e^{-\mu_{\parallel} t}\|\mathbf{V}_{\parallel}(0) - \mathbf{V}_{\parallel}^*\|
\end{equation}
\begin{equation}
    \|\mathbf{V}_{\perp}(t) - \mathbf{V}_{\perp}^*\| \leq e^{-\mu_{\perp} t}\|\mathbf{V}_{\perp}(0) - \mathbf{V}_{\perp}^*\|
\end{equation}
with $\mu_{\parallel} \gg \mu_{\perp}$, where $\mu_{\parallel} \propto \sigma_1^2$ and $\mu_{\perp} \propto \sigma_{k+1}^2$.
\end{theorem}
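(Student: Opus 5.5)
The plan is to linearize the gradient flow around a stationary point $\mathbf{V}^*$ and split the resulting linear system along the orthogonal decomposition $\mathbb{R}^d = \mathrm{span}(\mathbf{U}_k) \oplus \mathrm{span}(\mathbf{U}_k)^\perp$. Write $\mathbf{V}_{\parallel} = \mathbf{V}\mathbf{U}_k\mathbf{U}_k^\top$ and $\mathbf{V}_{\perp} = \mathbf{V}(\mathbf{I}-\mathbf{U}_k\mathbf{U}_k^\top)$. Since the two projectors are orthogonal and sum to the identity, the gradient splits as in the statement: $\nabla_{\mathbf{V}_{\parallel}}\mathcal{L} = \nabla_{\mathbf{V}}\mathcal{L}\,\mathbf{U}_k\mathbf{U}_k^\top$, and similarly for $\perp$. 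The gradient flow therefore induces two coupled flows, one on each block.

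The key step is to compute the Hessian of the BPR loss with respect to the item-feature matrix near $\mathbf{V}^*$. For $\hat y_{ui} = \langle \mathbf{e}_u, \mathbf{e}_i\rangle$, where $\mathbf{e}_i$ depends linearly on $\mathbf{v}_i$ through the encoder/propagation, the Hessian is a sum of terms $\sigma'(\cdot)(\mathbf{x}_{ui}-\mathbf{x}_{uj})(\mathbf{x}_{ui}-\mathbf{x}_{uj})^\top$. Here $\sigma'(z) = \sigma(z)(1-\sigma(z)) \in (0,1/4]$, so the Hessian is a weighted second-moment (Gram) matrix of the feature-difference directions. I would then argue that, near $\mathbf{V}^*$, this Hessian is approximately block diagonal in the $(\parallel,\perp)$ basis. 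Its restriction to $\mathrm{span}(\mathbf{U}_k)$ has smallest eigenvalue bounded below by $c\,\sigma_k^2$, taking $c\sigma_1^2$ as the representative scale. Its restriction to the complement has smallest eigenvalue of order $c\,\sigma_{k+1}^2$. This identification uses the fact that the second-moment structure of the features along $\mathbf{u}_i$ is governed by the cross-covariance $\mathbf{C}$: redundant directions carry energy reinforced by both modalities through the fused embedding $\mathbf{h}$. I would invoke the fusion-energy decomposition in the Proposition on Redundancy Amplification to justify that energy along $\mathbf{u}_i$ scales like $\sigma_i$, so curvature scales like $\sigma_i^2$.

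With local strong convexity on each block, with constants $\mu_{\parallel}\propto\sigma_1^2$ and $\mu_{\perp}\propto\sigma_{k+1}^2$, I would apply the standard Grönwall argument to each block. Setting $\mathbf{E}_\parallel = \mathbf{V}_\parallel-\mathbf{V}_\parallel^*$, we have $\frac{d}{dt}\tfrac12\|\mathbf{E}_\parallel\|^2 = -\langle \mathbf{E}_\parallel, \nabla_{\mathbf{V}_\parallel}\mathcal{L}\rangle \le -\mu_\parallel \|\mathbf{E}_\parallel\|^2$. This gives the exponential bound $e^{-\mu_\parallel t}$, and the same computation gives the $\perp$ bound. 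Finally, $\mu_\parallel\gg\mu_\perp$ follows from the spectral gap $\sigma_1 \gg \sigma_{k+1}$, which is the low-rank hypothesis $\mathrm{rank}(\mathbf{C})\le c$ of the theoretical-foundation section.

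The main obstacle is the block-diagonal and strong-convexity claim. BPR is not strongly convex globally: it is only convex, and flat in directions invariant to score differences. The cross terms between the $\parallel$ and $\perp$ blocks also need not vanish. I would handle this in two ways. First, I would restrict to a neighborhood of $\mathbf{V}^*$ and quotient out the flat directions, or add the $L_2$ regularizer $\gamma\mathcal{L}_{\text{reg}}$ to obtain a positive lower bound. Second, I would treat the off-diagonal blocks as perturbations bounded by the covariance between redundant and specific components, which vanishes under the orthogonality of singular vectors. If exact decoupling fails, the fallback is to state the rates as eigenvalues of the two diagonal blocks up to an $O(\|\text{off-diagonal}\|)$ correction via Weyl's inequality.
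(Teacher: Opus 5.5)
Your plan follows the paper's own route: write the BPR Hessian in $\mathbf{V}$, compress it to $\mathbf{U}_k^\top\mathbf{H}\mathbf{U}_k$ and the complementary block, read off $\lambda_{\min}$ of each, and apply linear convergence. The paper likewise just posits $\lambda_{\min}(\mathbf{H}_\parallel)\propto\sigma_1^2$ and $\lambda_{\min}(\mathbf{H}_\perp)\propto\sigma_{k+1}^2$. The justification you offer for that scaling does not work.

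The Gram vectors in the Hessian with respect to $\mathbf{V}$ are the score Jacobians $\partial(y_{ui}-y_{uj})/\partial\mathbf{V}$, not item features. Suppose $y_{ui}=\langle\mathbf{e}_u,\mathbf{e}_i\rangle$ with $\mathbf{e}_i=\mathbf{A}\mathbf{v}_i+(\text{terms free of }\mathbf{v}_i)$ and $\mathbf{e}_u$ independent of $\mathbf{V}$. Then the second derivative of the loss along $\Delta\mathbf{V}=\mathbf{a}\mathbf{u}_m^\top$ is $\sum_{(u,i,j)}\sigma'(y_{ui}-y_{uj})(a_i-a_j)^2(\mathbf{e}_u^\top\mathbf{A}\mathbf{u}_m)^2$. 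It is governed by how the user-side vectors project onto $\mathbf{A}\mathbf{u}_m$, and it depends on the item features only through the weights $\sigma'(\cdot)$. A Gram matrix of item-feature differences is the Hessian with respect to $\mathbf{e}_u$, not $\mathbf{V}$. If $\mathbf{e}_u$ also depends on $\mathbf{V}$, as in the GCN backbone, the Hessian picks up the indefinite term $-(1-\sigma(\cdot))\nabla^2_{\mathbf{V}}(y_{ui}-y_{uj})$ and is not a Gram matrix at all. Hence the Redundancy Amplification proposition, which concerns the energy of the state $\mathbf{h}$, says nothing about curvature. The energy claim is also only one-sided: $\sigma_m=\frac{1}{N}(\bar{\mathbf{V}}\mathbf{u}_m)^\top(\bar{\mathbf{T}}\mathbf{w}_m)\le\frac{1}{N}\|\bar{\mathbf{V}}\mathbf{u}_m\|\,\|\bar{\mathbf{T}}\mathbf{w}_m\|$, so a modality-specific direction can carry large energy with $\sigma_m\approx 0$. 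Any link between the spectrum of $\mathbf{C}$ and the loss curvature must be added as an explicit hypothesis on the score Jacobians; it cannot be derived from $\mathbf{C}$.

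Second, even if you postulate that $\mathbf{H}$ acts diagonally with curvature $\kappa\sigma_m^2$ along $\mathbf{u}_m$, the rates are attached to the wrong eigenvalues. The Gr\"onwall step $\langle\mathbf{E}_\parallel,\nabla_{\mathbf{V}_\parallel}\mathcal{L}\rangle\ge\mu_\parallel\|\mathbf{E}_\parallel\|^2$ needs the smallest eigenvalue of each block. On $\mathrm{span}(\mathbf{U}_k)$ that is $\kappa\sigma_k^2$, so $\kappa\sigma_1^2$ is not a ``representative scale'' but an invalid rate unless you also assume $\sigma_k\asymp\sigma_1$. On the complement the smallest eigenvalue is $\kappa\sigma_d^2$, whereas $\kappa\sigma_{k+1}^2$ is the largest there. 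Under the hypothesis $\mathrm{rank}(\mathbf{C})\le c<d$ that you invoke for the gap, $\sigma_d=0$, so the $\perp$ block is not strongly convex and no $\mu_\perp\propto\sigma_{k+1}^2>0$ is valid. What the heuristic actually supports is a separation between $\sigma_k^2$ and $\sigma_{k+1}^2$, i.e.\ a gap at index $k$, not $\sigma_1\gg\sigma_{k+1}$.

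Third, decoupling is not a perturbative detail. Orthogonality of the singular vectors does not make the off-diagonal block $\mathbf{H}_{\parallel\perp}$ vanish, because $\mathbf{U}$ diagonalizes $\mathbf{C}\mathbf{C}^\top$, not $\mathbf{H}$. In the linearized flow $\dot{\mathbf{E}}_\parallel=-\mathbf{H}_{\parallel\parallel}\mathbf{E}_\parallel-\mathbf{H}_{\parallel\perp}\mathbf{E}_\perp$. So if $\mathbf{E}_\parallel(0)=\mathbf{0}$ and $\mathbf{H}_{\parallel\perp}\mathbf{E}_\perp(0)\neq\mathbf{0}$, the claimed bound forces $\mathbf{E}_\parallel\equiv\mathbf{0}$ although $\dot{\mathbf{E}}_\parallel(0)\neq\mathbf{0}$. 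Per-block bounds of the stated form therefore require $\mathrm{span}(\mathbf{U}_k)$ to be exactly invariant under $\mathbf{H}$. Your Weyl fallback only moves eigenvalues and cannot supply this invariance. Under approximate invariance, the leaked part of $\mathbf{E}_\parallel$ decays at the slow rate, so the fast rate holds at most transiently.
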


\begin{proof}
The Hessian of BPR loss w.r.t. $\mathbf{V}$ is:
\begin{equation}
    \mathbf{H} = \nabla^2_{\mathbf{V}}\mathcal{L} = \sum_{(u,i,j)}\sigma'(y_{ui} - y_{uj})\mathbf{e}_i\mathbf{e}_i^\top
\end{equation}
Projecting onto redundant subspace: $\mathbf{H}_{\parallel} = \mathbf{U}_k^\top\mathbf{H}\mathbf{U}_k$. Since redundant features correlate strongly with collaborative signals:
\begin{equation}
    \lambda_{\min}(\mathbf{H}_{\parallel}) \propto \|\mathbf{U}_k^\top\mathbf{E}\|^2 \propto \sigma_1^2
\end{equation}
while $\lambda_{\min}(\mathbf{H}_{\perp}) \propto \sigma_{k+1}^2 \ll \sigma_1^2$. By linear convergence theory, $\mu = \lambda_{\min}(\mathbf{H})$, yielding the claim.
\end{proof}

\subsection{Null-Space Projection Provably Reduces Redundancy}

\begin{theorem}[Singular Value Suppression]
\label{thm:suppression}
After applying projection matrices $\mathbf{P}_v = \mathbf{I} - \lambda\mathbf{U}_k\mathbf{U}_k^\top$ and $\mathbf{P}_t = \mathbf{I} - \lambda\mathbf{W}_k\mathbf{W}_k^\top$, the new covariance becomes:
\begin{equation}
    \tilde{\mathbf{C}} = \frac{1}{N}(\mathbf{V}\mathbf{P}_v)^\top(\mathbf{T}\mathbf{P}_t) = \sum_{i=1}^d \tilde{\sigma}_i \mathbf{u}_i\mathbf{w}_i^\top
\end{equation}
where:
\begin{equation}
    \tilde{\sigma}_i = \begin{cases}
    (1-\lambda)^2\sigma_i & i \leq k \\
    \sigma_i & i > k
    \end{cases}
\end{equation}
\end{theorem}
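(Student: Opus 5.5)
The plan is to reduce everything to how $\mathbf{P}_v$ and $\mathbf{P}_t$ act on the singular vectors of $\mathbf{C}$. I read the features in the statement as the mean-centered $\bar{\mathbf{V}},\bar{\mathbf{T}}$, since the projected features are defined as $\tilde{\mathbf{V}}=\bar{\mathbf{V}}\mathbf{P}_v$ and $\tilde{\mathbf{T}}=\bar{\mathbf{T}}\mathbf{P}_t$. I also take $d_v=d_t=d$ after the MLP encoders, so the SVD $\mathbf{C}=\mathbf{U}\boldsymbol{\Sigma}\mathbf{W}^\top$ has square orthogonal $\mathbf{U},\mathbf{W}\in\mathbb{R}^{d\times d}$.

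First I would note that both projectors are symmetric, i.e.\ $\mathbf{P}_v^\top=\mathbf{P}_v$ and $\mathbf{P}_t^\top=\mathbf{P}_t$. This gives
\begin{equation}
\tilde{\mathbf{C}}=\tfrac{1}{N}\mathbf{P}_v^\top\bar{\mathbf{V}}^\top\bar{\mathbf{T}}\mathbf{P}_t=\mathbf{P}_v\,\mathbf{C}\,\mathbf{P}_t=\sum_{i=1}^d\sigma_i\,(\mathbf{P}_v\mathbf{u}_i)(\mathbf{P}_t\mathbf{w}_i)^\top .
\end{equation}

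Second, I would compute the action on each singular vector using orthonormality of the columns of $\mathbf{U}$. For every $i$ we have $\mathbf{U}_k\mathbf{U}_k^\top\mathbf{u}_i=\mathbf{u}_i$ if $i\le k$, and $\mathbf{U}_k\mathbf{U}_k^\top\mathbf{u}_i=\mathbf{0}$ if $i>k$. Hence $\mathbf{P}_v\mathbf{u}_i=(1-\lambda)\mathbf{u}_i$ for $i\le k$ and $\mathbf{P}_v\mathbf{u}_i=\mathbf{u}_i$ for $i>k$. The identical argument with $\mathbf{W}$ gives the same scaling for $\mathbf{P}_t\mathbf{w}_i$. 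Substituting into the sum, each rank-one term $\sigma_i\mathbf{u}_i\mathbf{w}_i^\top$ picks up a factor $(1-\lambda)^2$ when $i\le k$ and stays unchanged when $i>k$. This is exactly the claimed expansion with the stated $\tilde{\sigma}_i$.

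Third, I would check that this expansion really is an SVD of $\tilde{\mathbf{C}}$, so that the $\tilde{\sigma}_i$ are its singular values. The vectors $\{\mathbf{u}_i\}$ and $\{\mathbf{w}_i\}$ are still orthonormal bases. Since $\lambda\in[0,1]$, every coefficient $\tilde{\sigma}_i$ is nonnegative. Writing $\tilde{\mathbf{C}}=\mathbf{U}\tilde{\boldsymbol{\Sigma}}\mathbf{W}^\top$ then gives a valid SVD. The one caveat is that the descending order can break: $(1-\lambda)^2\sigma_k$ may fall below $\sigma_{k+1}$. So the $\tilde{\sigma}_i$ are the singular values of $\tilde{\mathbf{C}}$ up to a permutation, with the singular-vector pairs carried along unchanged.

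I expect no real computational obstacle. The delicate points are interpretive, and I would make them explicit assumptions in the write-up:
\begin{itemize}
\item the centering convention (uncentered $\mathbf{V}$ would add a mean cross term $\boldsymbol{\mu}_v\boldsymbol{\mu}_t^\top$ unless the means vanish);
\item the square-dimension assumption, so the full bases $\mathbf{U},\mathbf{W}$ exist;
\item the reordering remark, which reads $\tilde{\sigma}_i$ as the coefficient attached to the pair $(\mathbf{u}_i,\mathbf{w}_i)$ rather than as the $i$-th largest singular value.
\end{itemize}
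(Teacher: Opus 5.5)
Your proposal is correct and follows the paper's argument. The paper expands $\mathbf{P}_v^\top\mathbf{C}\mathbf{P}_t$ into four terms and uses $\mathbf{U}_k^\top\mathbf{U}$ and $\mathbf{W}^\top\mathbf{W}_k$ as block selectors to obtain the middle factor $\mathrm{diag}\bigl((1-\lambda)^2\boldsymbol{\Sigma}_k,\boldsymbol{\Sigma}_{\bar{k}}\bigr)$, which is the block-matrix form of your observation that each $\mathbf{u}_i$ and $\mathbf{w}_i$ is an eigenvector of its projector with eigenvalue $1-\lambda$ or $1$; your caveats, that the step $\tilde{\mathbf{C}}=\mathbf{P}_v\mathbf{C}\mathbf{P}_t$ presumes centered features and that the $\tilde{\sigma}_i$ need not stay in descending order, are valid points the paper leaves implicit.
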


\begin{proof}
Expand the projected covariance:
\begin{align}
    \tilde{\mathbf{C}} &= \mathbf{P}_v^\top\mathbf{C}\mathbf{P}_t \\
    &= (\mathbf{I} - \lambda\mathbf{U}_k\mathbf{U}_k^\top)\mathbf{U}\boldsymbol{\Sigma}\mathbf{W}^\top(\mathbf{I} - \lambda\mathbf{W}_k\mathbf{W}_k^\top) \\
    &= \mathbf{U}\boldsymbol{\Sigma}\mathbf{W}^\top - \lambda\mathbf{U}_k\mathbf{U}_k^\top\mathbf{U}\boldsymbol{\Sigma}\mathbf{W}^\top - \lambda\mathbf{U}\boldsymbol{\Sigma}\mathbf{W}^\top\mathbf{W}_k\mathbf{W}_k^\top \\
    &\quad + \lambda^2\mathbf{U}_k\mathbf{U}_k^\top\mathbf{U}\boldsymbol{\Sigma}\mathbf{W}^\top\mathbf{W}_k\mathbf{W}_k^\top
\end{align}
Partition $\mathbf{U} = [\mathbf{U}_k, \mathbf{U}_{\bar{k}}]$ and use orthonormality:
\begin{equation}
    \mathbf{U}_k^\top\mathbf{U} = \begin{bmatrix} \mathbf{I}_k \\ \mathbf{0} \end{bmatrix}, \quad \mathbf{W}^\top\mathbf{W}_k = \begin{bmatrix} \mathbf{I}_k \\ \mathbf{0} \end{bmatrix}
\end{equation}
Substituting:
\begin{align}
    \tilde{\mathbf{C}} &= \mathbf{U}\begin{bmatrix} (1-\lambda)^2\boldsymbol{\Sigma}_k & \mathbf{0} \\ \mathbf{0} & \boldsymbol{\Sigma}_{\bar{k}} \end{bmatrix}\mathbf{W}^\top \\
    &= \sum_{i=1}^k (1-\lambda)^2\sigma_i\mathbf{u}_i\mathbf{w}_i^\top + \sum_{i=k+1}^d \sigma_i\mathbf{u}_i\mathbf{w}_i^\top
\end{align}
\end{proof}

\subsection{On the Identifiability of Task-Irrelevant Redundancy}
\label{necessity}

\begin{assumption}[Semantic Hierarchy Hypothesis]
\label{assump:hierarchy}
We assume multimodal features exhibit hierarchical structure:
\begin{align}
    \mathbf{v} &= \mathbf{v}_{\text{coarse}} + \mathbf{v}_{\text{fine}} \\
    \mathbf{t} &= \mathbf{t}_{\text{coarse}} + \mathbf{t}_{\text{fine}}
\end{align}
where:
\begin{itemize}
    \item Coarse-grained: $\mathbf{v}_{\text{coarse}} \approx \mathbf{t}_{\text{coarse}}$ (e.g., product category, dominant color)
    \item Fine-grained: $\mathbf{v}_{\text{fine}} \not\approx \mathbf{t}_{\text{fine}}$ (e.g., texture details, style descriptors)
\end{itemize}
Furthermore, $\mathbb{E}[\|\mathbf{v}_{\text{coarse}}\|^2] \gg \mathbb{E}[\|\mathbf{v}_{\text{fine}}\|^2]$ but discriminative power satisfies:
\begin{equation}
    \text{FDR}(\mathbf{v}_{\text{fine}}) \gg \text{FDR}(\mathbf{v}_{\text{coarse}})
\end{equation}
\end{assumption}

\begin{theorem}[SVD Preferentially Captures Coarse-Grained Redundancy]
\label{thm:svd_preference}
Under Assumption \ref{assump:hierarchy}, decompose $\mathbf{C} = \mathbf{C}_{\text{coarse}} + \mathbf{C}_{\text{fine}}$ where:
\begin{equation}
    \mathbf{C}_{\text{coarse}} = \frac{1}{N}\bar{\mathbf{V}}_{\text{coarse}}^\top\bar{\mathbf{T}}_{\text{coarse}}, \quad \mathbf{C}_{\text{fine}} = \frac{1}{N}\bar{\mathbf{V}}_{\text{fine}}^\top\bar{\mathbf{T}}_{\text{fine}}
\end{equation}
Then the top singular value satisfies:
\begin{equation}
    \sigma_1(\mathbf{C}) \geq \sigma_1(\mathbf{C}_{\text{coarse}}) - \|\mathbf{C}_{\text{fine}}\|_2
\end{equation}
and for $k$ small enough:
\begin{equation}
    \text{span}(\mathbf{U}_k) \approx \text{span}(\mathbf{U}_{\text{coarse}})
\end{equation}
where $\mathbf{U}_{\text{coarse}}$ are the leading singular vectors of $\mathbf{C}_{\text{coarse}}$.
\end{theorem}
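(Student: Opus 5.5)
The plan is to treat $\mathbf{C}$ as a perturbation of $\mathbf{C}_{\text{coarse}}$ and then apply standard matrix perturbation theory: Weyl's inequality for the singular value bound and Wedin's $\sin\Theta$ theorem for the subspace statement. The first step is bookkeeping. Expanding $\bar{\mathbf{V}} = \bar{\mathbf{V}}_{\text{coarse}} + \bar{\mathbf{V}}_{\text{fine}}$ and $\bar{\mathbf{T}} = \bar{\mathbf{T}}_{\text{coarse}} + \bar{\mathbf{T}}_{\text{fine}}$ in $\mathbf{C} = \frac{1}{N}\bar{\mathbf{V}}^\top\bar{\mathbf{T}}$ produces two cross terms, $\frac{1}{N}\bar{\mathbf{V}}_{\text{coarse}}^\top\bar{\mathbf{T}}_{\text{fine}}$ and $\frac{1}{N}\bar{\mathbf{V}}_{\text{fine}}^\top\bar{\mathbf{T}}_{\text{coarse}}$, which the decomposition $\mathbf{C} = \mathbf{C}_{\text{coarse}} + \mathbf{C}_{\text{fine}}$ in the statement omits. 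I will handle them in one of two ways:
\begin{itemize}
\item add the natural condition to Assumption~\ref{assump:hierarchy} that coarse and fine components are empirically uncorrelated across items, so that the cross terms vanish; or
\item fold them into a single perturbation $\mathbf{E} = \mathbf{C} - \mathbf{C}_{\text{coarse}}$.
\end{itemize}
In the second case, Cauchy--Schwarz gives
\begin{equation}
\|\mathbf{E}\|_2 \le \|\mathbf{C}_{\text{fine}}\|_2 + \frac{1}{N}\bigl(\|\bar{\mathbf{V}}_{\text{coarse}}\|_2\|\bar{\mathbf{T}}_{\text{fine}}\|_2 + \|\bar{\mathbf{V}}_{\text{fine}}\|_2\|\bar{\mathbf{T}}_{\text{coarse}}\|_2\bigr).
\end{equation}
This bound is still small relative to $\sigma_1(\mathbf{C}_{\text{coarse}})$ because $\mathbb{E}\|\mathbf{v}_{\text{fine}}\|^2 \ll \mathbb{E}\|\mathbf{v}_{\text{coarse}}\|^2$, and likewise for $\mathbf{t}$. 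Everything below is then stated with $\|\mathbf{E}\|_2$, which equals $\|\mathbf{C}_{\text{fine}}\|_2$ under the uncorrelatedness condition.

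For the singular value bound I will apply Weyl's inequality for singular values, $|\sigma_i(\mathbf{A}+\mathbf{E}) - \sigma_i(\mathbf{A})| \le \|\mathbf{E}\|_2$, with $\mathbf{A} = \mathbf{C}_{\text{coarse}}$. Taking $i=1$ gives $\sigma_1(\mathbf{C}) \ge \sigma_1(\mathbf{C}_{\text{coarse}}) - \|\mathbf{C}_{\text{fine}}\|_2$ immediately. If a self-contained argument is preferred, choose the top singular pair $(\mathbf{a},\mathbf{b})$ of $\mathbf{C}_{\text{coarse}}$. Then
\begin{equation}
\sigma_1(\mathbf{C}) \ge \mathbf{a}^\top\mathbf{C}\mathbf{b} = \sigma_1(\mathbf{C}_{\text{coarse}}) + \mathbf{a}^\top\mathbf{C}_{\text{fine}}\mathbf{b} \ge \sigma_1(\mathbf{C}_{\text{coarse}}) - \|\mathbf{C}_{\text{fine}}\|_2 .
\end{equation}
To show this bound is not vacuous, I will also record the scaling. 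The coarse parts satisfy $\mathbf{v}_{\text{coarse}} \approx \mathbf{t}_{\text{coarse}}$, so $\mathbf{C}_{\text{coarse}}$ is close to the positive semidefinite autocovariance of the coarse part, and $\sigma_1(\mathbf{C}_{\text{coarse}}) \gtrsim \frac{1}{r}\mathbb{E}\|\mathbf{v}_{\text{coarse}}\|^2$, where $r$ is its rank (at most $c$ by the low-rank hypothesis). In contrast, $\|\mathbf{C}_{\text{fine}}\|_2 \le \sqrt{\mathbb{E}\|\mathbf{v}_{\text{fine}}\|^2\,\mathbb{E}\|\mathbf{t}_{\text{fine}}\|^2}$, which is much smaller.

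The subspace claim is the real content and the main obstacle, because the statement's "$\approx$" has to be given a precise meaning. I will make it quantitative via Wedin's theorem. Let $\delta_k = \sigma_k(\mathbf{C}_{\text{coarse}}) - \sigma_{k+1}(\mathbf{C}_{\text{coarse}})$ be the spectral gap. Then
\begin{equation}
\|\sin\Theta(\mathbf{U}_k,\mathbf{U}_{\text{coarse}})\|_2 \le \frac{\|\mathbf{E}\|_2}{\delta_k - \|\mathbf{E}\|_2},
\end{equation}
and the analogous bound holds for $\mathbf{W}_k$. I will read "$k$ small enough" as $k \le \operatorname{rank}(\mathbf{C}_{\text{coarse}})$, with $k$ chosen where the coarse spectrum still has a gap that dominates $\|\mathbf{E}\|_2$. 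In the cleanest case $k = r = \operatorname{rank}(\mathbf{C}_{\text{coarse}})$, we have $\sigma_{r+1}(\mathbf{C}_{\text{coarse}}) = 0$, so $\delta_r = \sigma_r(\mathbf{C}_{\text{coarse}})$. The energy separation in Assumption~\ref{assump:hierarchy} then makes the right-hand side $O\bigl(\|\mathbf{E}\|_2/\sigma_r(\mathbf{C}_{\text{coarse}})\bigr) \to 0$.

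The delicate point is that the assumption controls the total coarse energy but not the smallest nonzero coarse singular value. I would therefore either add an explicit gap condition $\delta_k \gg \|\mathbf{C}_{\text{fine}}\|_2$ to the hypotheses, or argue that the category-level factors are few and roughly balanced, so that $\sigma_r(\mathbf{C}_{\text{coarse}})$ is comparable to $\sigma_1(\mathbf{C}_{\text{coarse}})$. Finally, I will combine this with the FDR ordering in Assumption~\ref{assump:hierarchy} to interpret the result. The projections $\mathbf{P}_v,\mathbf{P}_t$ of Theorem~\ref{thm:suppression} then remove mostly low-FDR coarse directions, and the leakage into fine-grained directions is bounded by the $\sin\Theta$ quantity above.
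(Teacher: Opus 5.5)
Your proposal is correct and follows the same route as the paper: Weyl's inequality gives the bound on $\sigma_1(\mathbf{C})$, and a $\sin\Theta$ perturbation bound under a spectral-gap condition gives the subspace claim. Your version is more careful in three ways. The paper silently drops the cross terms $\frac{1}{N}\bar{\mathbf{V}}_{\text{coarse}}^\top\bar{\mathbf{T}}_{\text{fine}}$ and $\frac{1}{N}\bar{\mathbf{V}}_{\text{fine}}^\top\bar{\mathbf{T}}_{\text{coarse}}$, which you handle explicitly. It cites Davis--Kahan, a symmetric-matrix result, where Wedin's theorem is the right tool for singular subspaces. And it simply posits a large gap rather than noting, as you do, that Assumption~\ref{assump:hierarchy} does not supply one.
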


\begin{proof}
By Weyl's inequality for singular values:
\begin{equation}
    |\sigma_i(\mathbf{C}) - \sigma_i(\mathbf{C}_{\text{coarse}})| \leq \|\mathbf{C}_{\text{fine}}\|_2
\end{equation}
Since $\|\mathbf{C}_{\text{coarse}}\|_F \gg \|\mathbf{C}_{\text{fine}}\|_F$ (coarse-grained dominates energy), we have:
\begin{equation}
    \sigma_1(\mathbf{C}_{\text{coarse}}) \gg \|\mathbf{C}_{\text{fine}}\|_2
\end{equation}
By Davis-Kahan $\sin\Theta$ theorem, the subspace distance satisfies:
\begin{equation}
    \|\sin\Theta(\mathbf{U}_k, \mathbf{U}_{\text{coarse},k})\|_F \leq \frac{\|\mathbf{C}_{\text{fine}}\|_F}{\sigma_k(\mathbf{C}_{\text{coarse}}) - \sigma_{k+1}(\mathbf{C})}
\end{equation}
When spectral gap $\sigma_k(\mathbf{C}_{\text{coarse}}) - \sigma_{k+1}(\mathbf{C})$ is large and $\|\mathbf{C}_{\text{fine}}\|_F$ is small, the subspaces nearly coincide.
\end{proof}

\begin{corollary}[Task Irrelevance of Coarse Information]
In recommendation, coarse-grained category information has limited utility because:
\begin{equation}
    P(\text{user } u \text{ likes item } i | \text{category}(i)) \approx P(\text{user } u \text{ likes item } i)
\end{equation}
i.e., knowing only the category provides minimal predictive power for personalized preferences. Fine-grained, modality-specific attributes (captured in $\mathbf{v}_{\perp}, \mathbf{t}_{\perp}$) are critical for distinguishing items within categories.
\end{corollary}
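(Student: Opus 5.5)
The plan is to read the corollary as a quantitative consequence of Assumption~\ref{assump:hierarchy}, combined with Theorem~\ref{thm:discriminability} and Theorem~\ref{thm:svd_preference}. The ``$\approx$'' will be made precise as a small bound on the mutual information $I(Y_{ui};\text{category}(i))$ between the preference label $Y_{ui}\in\{0,1\}$ and the category. This is equivalent to the conditional law being close to the marginal: by Pinsker's inequality, the expected total variation between $P(Y_{ui}\mid\text{category}(i))$ and $P(Y_{ui})$ is at most $\sqrt{2\,I(Y_{ui};\text{category}(i))}$. So it suffices to show this mutual information is small.

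\textbf{Step 1: category is a function of the coarse component.} First I would argue that $\text{category}(i)$ is (approximately) a measurable function of $\mathbf{v}_{\text{coarse},i}$, or equivalently of $\mathbf{v}_{i,\parallel}=\mathbf{U}_k\mathbf{U}_k^\top\mathbf{v}_i$. This uses Theorem~\ref{thm:svd_preference}, which gives $\text{span}(\mathbf{U}_k)\approx\text{span}(\mathbf{U}_{\text{coarse}})$ up to the Davis--Kahan error. By the data-processing inequality,
\[
I(Y_{ui};\text{category}(i)) \le I(Y_{ui};\mathbf{v}_{i,\parallel}) + \varepsilon_{\mathrm{DK}},
\]
where $\varepsilon_{\mathrm{DK}}$ absorbs the subspace mismatch.

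\textbf{Step 2: bound the information by the Fisher discriminant ratio.} Fix a user $u$ and take the two ``classes'' in Theorem~\ref{thm:discriminability} to be $c_1=\{i: r_{ui}=1\}$ and $c_2=\{i: r_{ui}=0\}$. I would model $\mathbf{v}_{i,\parallel}\mid Y_{ui}$ as a two-component Gaussian mixture with shared within-class covariance. For such a mixture, the mutual information is at most the capacity of a Gaussian channel with that signal-to-noise ratio, namely
\[
I(Y_{ui};\mathbf{v}_{i,\parallel}) \le \tfrac12\log\bigl(1+\text{FDR}_\parallel\bigr)\le \tfrac12\,\text{FDR}_\parallel .
\]
Theorem~\ref{thm:discriminability} gives $\text{FDR}_\parallel=O(\sigma_k^2/\sigma_1^2)$, which is small under the spectral-gap condition. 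Assumption~\ref{assump:hierarchy} gives $\text{FDR}(\mathbf{v}_{\text{coarse}})\ll\text{FDR}(\mathbf{v}_{\text{fine}})$. Averaging over users and chaining with Step 1 and Pinsker then yields the displayed approximation.

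\textbf{Step 3: the contrast with fine-grained attributes.} For the second sentence of the corollary, I would show $I(Y_{ui};\mathbf{v}_{i,\perp})$ is bounded below by a monotone function of $\text{FDR}_\perp$, for example through Fano's inequality or the Bayes error of the Fisher discriminant. Because $\text{FDR}_\perp\gg\text{FDR}_\parallel$, the residual predictive power must come from $\mathbf{v}_\perp$ and $\mathbf{t}_\perp$.

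\textbf{Main obstacle.} The hard step is Step 2's identification of the categorical classes in Theorem~\ref{thm:discriminability} with the per-user preference classes. That theorem and the assumption are stated for item categories, not for $Y_{ui}$. I would first try to add an explicit hypothesis: within-category variation of $Y_{ui}$ dominates between-category variation, i.e.\ $\mathrm{Var}_c\,\mathbb{E}[Y_{ui}\mid c]\ll\mathbb{E}_c\mathrm{Var}[Y_{ui}\mid c]$. I would then derive the Fisher-ratio bound directly from this hypothesis, accepting that the corollary is only as rigorous as that hypothesis.
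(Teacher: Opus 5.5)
The paper gives no proof of this corollary. The displayed approximation is offered as the \emph{reason} (``because''), so it acts as a modeling premise. Nothing in Assumption~\ref{assump:hierarchy} or Theorems~\ref{thm:discriminability} and~\ref{thm:svd_preference} mentions the preference label; they only concern how feature components separate item categories and which subspace the SVD recovers.

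Your proposal has a genuine gap at exactly this point, and the patch you propose makes it circular. A claim about the joint law of $(Y_{ui},\text{category}(i))$ cannot be extracted from feature-geometry facts about categories, which is why you need the extra hypothesis in your last paragraph. That hypothesis, $\mathrm{Var}_c\,\mathbb{E}[Y_{ui}\mid c]\ll\mathbb{E}_c\,\mathrm{Var}[Y_{ui}\mid c]$, is by the law of total variance equivalent to
\[
\mathbb{E}_c\bigl[(P(Y_{ui}=1\mid c)-P(Y_{ui}=1))^2\bigr]\ll\mathrm{Var}(Y_{ui}),
\]
which is the corollary's display in mean-square form. Once you assume it, Steps~1--2 (Davis--Kahan, data processing, the Gaussian mixture, Pinsker) are superfluous. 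Without it, no step connects the paper's premises to $Y_{ui}$.

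Even as a route, turning the hypothesis into a small $\text{FDR}_\parallel$ for the classes $\{r_{ui}=1\}$, $\{r_{ui}=0\}$ would also need $\mathbf{v}_{i,\parallel}$ to carry no preference information beyond the category. That is a conditional-independence assumption like $Y_{ui}\perp\mathbf{v}_{i,\parallel}\mid c$, pointing the opposite way from Step~1's ``category is a function of $\mathbf{v}_\parallel$''. The corollary's second sentence just restates Assumption~\ref{assump:hierarchy} ($\text{FDR}(\mathbf{v}_{\text{fine}})\gg\text{FDR}(\mathbf{v}_{\text{coarse}})$) combined with Theorem~\ref{thm:svd_preference}. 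The honest formulation is the one the paper implicitly adopts: the display is an assumption, not a corollary.

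Several steps also fail on their own terms.
\begin{itemize}
\item \textbf{Step~2.} The Gaussian-channel bound controls $I(Y_{ui};\mathbf{v}_{i,\parallel})$ through the Mahalanobis quantity $p(1-p)\,\Delta^\top\Sigma^{-1}\Delta$. Here $\Delta$ is the class-mean gap, $\Sigma$ the shared within-class covariance, and $p=P(Y_{ui}=1)$. It does not control it through the paper's $\text{FDR}_\parallel$, whose denominator is a trace variance. For a counterexample, work in coordinates of $\mathrm{span}(\mathbf{U}_k)$ with $k\ge 2$, and take $\Delta=\mathbf{e}_1$ and $\Sigma=\mathrm{diag}(\epsilon,M,\dots,M)$. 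Then $\text{FDR}_\parallel\to 0$ as $M\to\infty$, while the classes become separable as $\epsilon\to 0$. So $I\to H(Y_{ui})$, and $I\le\tfrac12\log(1+\text{FDR}_\parallel)$ fails.
\item \textbf{Step~1, error term.} A subspace-angle bound does not give an additive mutual-information error $\varepsilon_{\mathrm{DK}}$. The information carried by a discrete function of a vector is not continuous under small rotations of that vector.
\item \textbf{Step~1, tension with Theorem~\ref{thm:discriminability}.} Step~1 needs $\mathbf{v}_\parallel$ to encode the category. Theorem~\ref{thm:discriminability}, whose proof is explicitly about category classes, asserts that categories are poorly separated in $\mathrm{span}(\mathbf{U}_k)$. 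You are leaning on two premises that pull in opposite directions.
\item \textbf{Step~3.} You only have the relative statement $\text{FDR}_\perp\gg\text{FDR}_\parallel$. This gives no absolute lower bound on $I(Y_{ui};\mathbf{v}_{i,\perp})$ and says nothing about discrimination \emph{within} a category.
\item \textbf{The target itself.} Reducing to ``the mutual information is small'' is the wrong target here. With interaction densities around $10^{-3}$, $\mathbb{E}_c|P(Y_{ui}=1\mid c)-p|\le 2p(1-p)$ holds automatically, and $I\le H(Y_{ui})$ is tiny anyway. So ``minimal predictive power'' must be a relative (lift near one) statement. Pinsker's inequality delivers that only when $I\ll p^2$.
\end{itemize}
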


\begin{theorem}[Soft Projection as Risk Minimization]
\label{thm:soft_projection}
Define the trade-off:
\begin{equation}
    \min_{\lambda, k} \underbrace{\mathbb{E}[\mathcal{L}_{\text{rec}}(\tilde{\mathbf{V}}, \tilde{\mathbf{T}})]}_{\text{task loss}} + \beta \underbrace{\|\tilde{\mathbf{C}}\|_F^2}_{\text{redundancy penalty}}
\end{equation}
The soft projection parameter $\lambda \in [0, 1]$ implements a continuous relaxation:
\begin{equation}
    \lambda = 0: \text{ no suppression}, \quad \lambda = 1: \text{ hard projection}
\end{equation}
By varying $\lambda$, we traverse the Pareto frontier between redundancy removal and information preservation.
\end{theorem}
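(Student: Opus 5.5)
The argument reduces the statement to two concrete properties of the map $\lambda \mapsto \tilde{\mathbf{C}}(\lambda)$ and of the retained signal. One is monotone decrease of the redundancy penalty. The other is monotone decrease of the energy preserved along the redundant directions. Together they make the family $\{\lambda\in[0,1]\}$ a one-parameter curve whose endpoints are the two extremes of the trade-off. Fix $k$ first and treat the outer minimization over $k$ at the end as a finite search.

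\textbf{Step 1: the penalty as a function of $\lambda$.} Theorem~\ref{thm:suppression} gives $\tilde{\sigma}_i=(1-\lambda)^2\sigma_i$ for $i\le k$ and $\tilde\sigma_i=\sigma_i$ otherwise. Hence
\begin{equation*}
\|\tilde{\mathbf{C}}(\lambda)\|_F^2=(1-\lambda)^4\sum_{i\le k}\sigma_i^2+\sum_{i>k}\sigma_i^2 .
\end{equation*}
This is strictly decreasing in $\lambda$ on $[0,1]$ whenever $\sigma_k>0$. It equals $\|\mathbf{C}\|_F^2$ at $\lambda=0$, which is the no-suppression case since $\mathbf{P}_v=\mathbf{P}_t=\mathbf{I}$. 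At $\lambda=1$ it equals the hard-projection minimum $\sum_{i>k}\sigma_i^2$, since $\mathbf{P}_v=\mathbf{P}_v^\perp$. This settles the two endpoint claims directly from the definitions.

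\textbf{Step 2: information preservation as a function of $\lambda$.} Measure preserved information by the retained energy $\|\tilde{\mathbf{V}}\|_F^2+\|\tilde{\mathbf{T}}\|_F^2$. Equivalently, measure it through the component of $\tilde{\mathbf{V}}$ along $\operatorname{span}(\mathbf{U}_k)$. Since $\mathbf{P}_v$ acts as $(1-\lambda)$ on $\operatorname{span}(\mathbf{U}_k)$ and as the identity on its complement, we have $\|\bar{\mathbf{V}}\mathbf{P}_v\|_F^2=(1-\lambda)^2\|\bar{\mathbf{V}}\mathbf{U}_k\|_F^2+\|\bar{\mathbf{V}}\mathbf{P}_v^\perp\|_F^2$. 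This is also monotone decreasing in $\lambda$, and the same holds for $\mathbf{T}$. The two objectives therefore move in opposite directions along the curve. Writing both in terms of $s=(1-\lambda)^2\in[0,1]$, the penalty is $a s^2+b$ and the preserved energy is $c s+e$ with $a,c\ge 0$. Eliminating $s$ gives penalty $=a\big((E-e)/c\big)^2+b$ as a function of preserved energy $E$. This is convex and increasing, so no point on the curve is dominated by another point on the curve. Every $\lambda$ is thus Pareto-optimal within the soft-projection family. For each $\beta$, the scalarized objective then has a minimizer $\lambda^*(\beta)$ on the compact set $[0,1]$, by continuity in $\lambda$.

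\textbf{Main obstacle.} The statement's first term is the task loss $\mathbb{E}[\mathcal{L}_{\text{rec}}]$, not retained energy, and this term need not be monotone in $\lambda$. The rigorous content is therefore only the traversal along the two computable proxies. To connect it to the task loss, I would first try to invoke Theorem~\ref{thm:svd_preference} together with Assumption~\ref{assump:hierarchy}. The suppressed span then approximately coincides with the coarse subspace, which has low FDR. Under a linear-scoring surrogate, the task loss should then depend on the suppressed part only through a term scaling with $(1-\lambda)^2$ times a small FDR factor. If that surrogate argument fails, I would state the theorem explicitly relative to the retained-energy proxy and claim Pareto traversal only in that sense.
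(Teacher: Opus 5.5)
Your Step~1 is exactly the paper's proof. The paper substitutes $\tilde\sigma_i=(1-\lambda)^2\sigma_i$ from Theorem~\ref{thm:suppression}, writes $\|\tilde{\mathbf{C}}\|_F^2=(1-\lambda)^4\mathcal{R}_k+\text{const}$, differentiates, and then only says that $\lambda^*$ ``balances this against task loss via validation.'' So the paper gives no argument at all for the information-preservation axis or for the ``Pareto frontier'' sentence. Your Step~2 supplies that missing axis through the retained energy $cs+e$. The opposite monotonicity of $as^2+b$ and $cs+e$ in $s=(1-\lambda)^2$ is what gives the trade-off claim actual content, and this is what your route buys over the paper's.

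Your assessment of the task-loss term is also right. Nothing in the paper makes $\mathbb{E}[\mathcal{L}_{\text{rec}}]$ monotone in $\lambda$, and ``via validation'' concedes this. Stating the result relative to the proxies is therefore the honest version; the surrogate route through Theorem~\ref{thm:svd_preference} would only inherit its $\approx$ and $\gg$ steps.

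A few points need tightening:
\begin{itemize}
\item Non-domination needs $a,c>0$, not $a,c\ge 0$. Both hold as soon as $\mathbf{C}\neq\mathbf{0}$, since $\sigma_1=\tfrac1N(\bar{\mathbf{V}}\mathbf{u}_1)^\top(\bar{\mathbf{T}}\mathbf{w}_1)>0$ forces $\bar{\mathbf{V}}\mathbf{u}_1\neq\mathbf{0}$.
\item Given $a,c>0$, strict monotonicity of both coordinates in $s$ already implies non-domination. Convexity of the eliminated curve is not needed; it matters only if you want every point to be a weighted-sum minimizer.
\item Your ``strictly decreasing on $[0,1]$'' is the correct form of the paper's ``$<0$'', which fails at $\lambda=1$ where the derivative vanishes.
\end{itemize}

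Keep the qualifier ``within the soft-projection family'': it is necessary, not merely cautious. Top singular directions of $\mathbf{C}$ maximize cross-covariance, not cross-covariance removed per unit of energy removed. For example, let both modalities have covariance $\mathrm{diag}(100,1)$ and let $\mathbf{C}=\mathrm{diag}(2,0.9)$. Hard-removing the second coordinate direction in both modalities gives penalty $4$ with retained energy $\tfrac1N(\|\tilde{\mathbf{V}}\|_F^2+\|\tilde{\mathbf{T}}\|_F^2)=200$. The $k=1$ soft-projection point with the same retained energy has $s=0.99$ and penalty $0.99^2\cdot 4+0.81\approx 4.73$. So the curve is not the global Pareto frontier, even among projections that remove a single direction per modality.
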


\begin{proof}
Substitute $\tilde{\sigma}_i = (1-\lambda)^2\sigma_i$ for $i \leq k$:
\begin{align}
    \|\tilde{\mathbf{C}}\|_F^2 &= \sum_{i=1}^k (1-\lambda)^4\sigma_i^2 + \sum_{i>k}\sigma_i^2 \\
    &= (1-\lambda)^4\mathcal{R}_k + \text{const}
\end{align}
Taking derivative w.r.t. $\lambda$:
\begin{equation}
    \frac{\partial}{\partial\lambda}\|\tilde{\mathbf{C}}\|_F^2 = -4(1-\lambda)^3\mathcal{R}_k < 0
\end{equation}
Thus increasing $\lambda$ monotonically reduces redundancy. The optimal $\lambda^*$ balances this against task loss via validation.
\end{proof}

\section{Time and Space Complexity}
To better study time and space complexity, we provide the pseudo code of CLEAR in Algorithm~\ref{alg:training}.

\begin{algorithm}[t]
\caption{Training Procedure of CLEAR}
\label{alg:training}
\KwIn{Interactions $\mathbf{R}$, raw features $\mathbf{V}^{\text{raw}}$, $\mathbf{T}^{\text{raw}}$, hyperparameters $k$, $\lambda$, $\gamma$, update interval $\tau$}
\KwOut{Optimized model parameters $\Theta$}
\For{epoch $e = 1$ \KwTo max\_epochs}{
    \If{$e \bmod \tau = 0$ \textbf{or} $e = 1$}{
        $\mathbf{V} \leftarrow \phi_v(\mathbf{V}^{\text{raw}})$, $\mathbf{T} \leftarrow \phi_t(\mathbf{T}^{\text{raw}})$\;
        $\bar{\mathbf{V}} \leftarrow \mathbf{V} - \frac{1}{N}\mathbf{1}\mathbf{V}$, $\bar{\mathbf{T}} \leftarrow \mathbf{T} - \frac{1}{N}\mathbf{1}\mathbf{T}$\;
        $\mathbf{C} \leftarrow \frac{1}{N} \bar{\mathbf{V}}^\top \bar{\mathbf{T}}$\;
        $\mathbf{U}, \boldsymbol{\Sigma}, \mathbf{W}^\top \leftarrow \text{SVD}(\text{detach}(\mathbf{C}))$\;
        $\mathbf{P}_v \leftarrow \mathbf{I} - \lambda \mathbf{U}_{:,1:k} \mathbf{U}_{:,1:k}^\top$, $\mathbf{P}_t \leftarrow \mathbf{I} - \lambda \mathbf{W}_{1:k,:}^\top \mathbf{W}_{1:k,:}$\;
    }
    
    \For{\textbf{each} mini-batch $(u, i^+, i^-)$}{
        $\tilde{\mathbf{V}} \leftarrow \mathbf{V} \mathbf{P}_v$, $\tilde{\mathbf{T}} \leftarrow \mathbf{T} \mathbf{P}_t$\;
        $\mathbf{H}_v, \mathbf{H}_t \leftarrow \text{GCN}(\tilde{\mathbf{V}}, \tilde{\mathbf{T}}, \mathbf{A}^{\text{drop}})$\;
        $\mathbf{E}_u \leftarrow \text{Fuse}(\mathbf{H}_v[:N_u], \mathbf{H}_t[:N_u], \mathbf{w}_u)$\;
        $\mathbf{E}_i \leftarrow \mathbf{H}_v[N_u:] + \mathbf{H}_t[N_u:] + \text{ItemGraph}([\mathbf{H}_v, \mathbf{H}_t][N_u:], \mathbf{A}_{\text{mm}})$\;
        $\hat{y}_{ui^+} \leftarrow \langle \mathbf{E}_u, \mathbf{E}_{i^+} \rangle$, $\hat{y}_{ui^-} \leftarrow \langle \mathbf{E}_u, \mathbf{E}_{i^-} \rangle$\;
        $\mathcal{L} \leftarrow -\log\sigma(\hat{y}_{ui^+} - \hat{y}_{ui^-}) + \gamma(\|\mathbf{H}_v[u]\|_2^2 + \|\mathbf{H}_t[u]\|_2^2 + \|\mathbf{w}_u\|_2^2)$\;
        $\Theta \leftarrow \Theta - \eta \nabla_\Theta \mathcal{L}$\;
    }
}
\end{algorithm}

\begin{table*}[]
    \caption{Efficiency Analysis. We statistics the memory and the running time per epoch for each model.}
    \setlength\tabcolsep{4pt}
    \label{effciency}
    \begin{tabular}{ccccccccc}
    \toprule 
    \multirow{2}{*}{ Dataset } & \multirow{2}{*}{ Metric } & \multicolumn{3}{c}{ General CF model } & \multicolumn{4}{c}{ Multimodal model } \\
    \cmidrule(l){3-9}
    & & BPR & LightGCN & VBPR & MMGCN  & LATTICE & FREEDOM & CLEAR \\
    \midrule \multirow{2}{*}{ Baby } & Memory (GB) & 1.59 & 1.69 & 1.89 & 2.69 & 4.53 & 2.13 & 3.11 \\
    & Time (s/epoch) & 0.86 & 1.81 & 1.04 & 6.35  & 3.42 & 2.58 & 2.30 \\
    \midrule \multirow{2}{*}{ Sports } & Memory (GB) & 2.00 & 2.24 & 2.71 & 3.91 & 19.93 & 3.34 & 4.12 \\
    & Time (s/epoch) & 1.24 & 3.74 & 1.67 & 20.88  & 13.59 & 4.52 & 4.77 \\
    \midrule \multirow{2}{*}{ Clothing } & Memory (GB) & 2.16 & 2.43 & 3.02 & 4.24 & 28.22 & 4.15 &  4.43 \\
    & Time (s/epoch) & 1.52 & 4.58 & 2.02 & 26.13  & 23.96 & 5.74 & 5.34 \\
    \bottomrule
    \end{tabular}
\end{table*}

\section{Details of Experiment Setup}

\subsection{Datasets}
\label{app:dataset}
\begin{itemize}[leftmargin=*]
    \item \textbf{Baby}: This dataset contains user interactions with baby products, including items such as strollers, toys, and baby care products.
    
    \item \textbf{Sports}: This dataset consists of user interactions with sports and outdoor products, covering equipment, clothes, and accessories.
    
    \item \textbf{Clothing}: This dataset comprises user interactions with clothing, shoes, and jewelry items, representing a fashion-oriented domain.
\end{itemize}

\subsection{Baselines}
\label{app:baselines}
To comprehensively evaluate the effectiveness of our proposed method, we compare it against 13 representative baseline methods, which can be categorized into two groups:

\textbf{General Collaborative Filtering Methods.} These methods only utilize user-item interactions without multimodal information:
\begin{itemize}[leftmargin=*]
    \item \textbf{MF-BPR}~\cite{rendle2012bpr}: A classic matrix factorization method that optimizes the Bayesian Personalized Ranking loss for implicit feedback recommendation.
    
    \item \textbf{LightGCN}~\cite{he2020lightgcn}: A simplified graph convolutional network that removes feature transformation and nonlinear activation, achieving state-of-the-art performance in collaborative filtering.
    
    \item \textbf{LayerGCN}~\cite{zhou2023layer}: A layer-refined GCN model that addresses the over-smoothing problem by refining layer representations during information propagation.
\end{itemize}

\textbf{Multimodal Recommendation Methods.} These methods incorporate visual and textual features to enhance recommendation:
\begin{itemize}[leftmargin=*]
    \item \textbf{VBPR}~\cite{he2016vbpr}: A pioneering method that integrates visual features into matrix factorization through Bayesian Personalized Ranking.
    
    \item \textbf{MMGCN}~\cite{wei2019mmgcn}: A multi-modal graph convolution network that constructs modality-specific bipartite graphs to capture fine-grained user preferences.
    
    \item \textbf{DualGNN}~\cite{wang2021dualgnn}: A dual graph neural network that leverages both user-item bipartite and user co-occurrence graphs to model modality-specific user preferences.
    
    \item \textbf{LATTICE}~\cite{zhang2021LATTICE}: A method that mines latent item structures from multimodal features and performs graph convolution on the learned item-item graph.
    
    \item \textbf{SLMRec}~\cite{tao2022self}: A self-supervised learning framework that designs feature dropout and masking augmentations to reveal hidden multimodal patterns.
    
    \item \textbf{BM3}~\cite{zhou2023bootstrap}: A simplified self-supervised method that uses dropout-based perturbation to bootstrap latent representations for multimodal recommendation.
    
    \item \textbf{MMSSL}~\cite{wei2023multi}: A multi-modal self-supervised learning method that employs adversarial perturbations and cross-modal contrastive learning to disentangle common and specific features.
    
    \item \textbf{FREEDOM}~\cite{zhou2023tale}: A method that freezes item-item graph structures and applies degree-sensitive edge dropout for denoising in multimodal recommendation.
    
    \item \textbf{MGCN}~\cite{yu2023multi}: A multi-view graph convolutional network that purifies modality features and designs a behavior-aware fuser to adaptively learn modality importance.

    \item \textbf{LGMRec}~\cite{guo2024lgmrec}: A method that uses a hypergraph neural network to extract various modal information deeply.
    
    \item \textbf{MENTOR}~\cite{xu2025mentor}:A method that uses multi-level self-supervised cross-modal alignment to address the label sparsity problem and the modality alignment problem.
\end{itemize}

\end{document}